\let\newmathbb\mathbb
    \let\mathbb\relax
    \newcommand{\mathbb}[1]{\newmathbb{#1}}
\renewcommand{\mathsf}[1]{\text{\upshape\sffamily#1}}
\crefname{theorem}{Theorem}{Theorems}
\crefname{proposition}{Proposition}{Propositions}
\crefname{lemma}{Lemma}{Lemmas}
\crefname{claim}{Claim}{Claims}
\crefname{corollary}{Corollary}{Corollaries}
\crefname{remark}{Remark}{Remarks}
\crefname{observation}{Observation}{Observations}
\crefname{hypothesis}{Hypothesis}{Hypotheses}
\crefname{definition}{Definition}{Definitions}
\crefname{problem}{Problem}{Problems}
\crefname{example}{Example}{Examples}
\crefname{appendix}{Appendix}{Appendices}
\crefname{section}{Section}{Sections}
\crefname{equation}{Eq.}{Eqs.}
\crefname{figure}{Figure}{Figures}
\crefname{table}{Table}{Tables}
\crefname{algorithm}{Algorithm}{Algorithms}
\algnewcommand{\algorithmicand}{\textbf{ and }}
\algnewcommand{\algorithmicor}{\textbf{ or }}
\algnewcommand{\algorithmicto}{\textbf{ to }}
\algrenewcommand\textproc{\textsl}
\renewcommand{\geq}{\geqslant}
\renewcommand{\leq}{\leqslant}
\renewcommand{\phi}{\varphi}
\renewcommand{\epsilon}{\varepsilon}
\renewcommand{\bar}{\overline}
\renewcommand{\tilde}{\widetilde}
\newcommand{\prb}[1]{\textup{\textsc{#1}}\xspace}
\renewcommand{\vec}[1]{\mathbf{\bm{#1}}}
\newcommand{\reco}{\leftrightsquigarrow}
\DeclareMathOperator{\val}{\mathsf{val}}
\DeclareMathOperator{\bigO}{\mathcal{O}}
\DeclareMathOperator*{\argmin}{argmin}
\let\Pr\relax\DeclareMathOperator*{\Pr}{\mathbb{Pr}}
\DeclareMathOperator{\Enc}{\mathsf{Enc}}
\DeclareMathOperator{\Dec}{\mathsf{Dec}}
\newcommand{\sss}{\mathsf{start}}
\newcommand{\ttt}{\mathsf{goal}}
\newcommand{\ckt}{\mathrm{ckt}}
\newcommand{\asgmt}{A}                 
\newcommand{\sqasgmt}{\scrA}          
\newcommand{\sqsigma}{\sigma}
\newcommand{\sqgamma}{\gamma}
\newcommand{\sqpi}{\pi}
\newcommand{\bbN}{\mathbb{N}}
\newcommand{\scrA}{\mathscr{A}}
\newcommand{\scrC}{\mathscr{C}}
\newcommand{\scrD}{\mathscr{D}}
\newcommand{\scrE}{\mathscr{E}}
\newtheorem{theorem}{Theorem}[section]
\newtheorem{proposition}[theorem]{Proposition}
\newtheorem{lemma}[theorem]{Lemma}
\newtheorem{claim}[theorem]{Claim}
\newtheorem{corollary}[theorem]{Corollary}
\newtheorem{observation}[theorem]{Observation}
\theoremstyle{definition}
\newtheorem{definition}[theorem]{Definition}
\newtheorem{problem}[theorem]{Problem}
\newtheorem{example}[theorem]{Example}
\newenvironment{claim*}{\begin{claim}}{\end{claim}}
\numberwithin{equation}{section}
\title{Probabilistically Checkable Reconfiguration Proofs and Inapproximability of Reconfiguration Problems}
\author{
Shuichi Hirahara\thanks{National Institute of Informatics, Japan.
\href{mailto:s\_hirahara@nii.ac.jp}{\texttt{s\_hirahara@nii.ac.jp}}
}
\and
Naoto Ohsaka\thanks{CyberAgent, Inc., Tokyo, Japan. \href{mailto:ohsaka\_naoto@cyberagent.co.jp}{\texttt{ohsaka\_naoto@cyberagent.co.jp}}; \href{mailto:naoto.ohsaka@gmail.com}{\texttt{naoto.ohsaka@gmail.com}}
}
}
\date{\today}
\begin{document}

\maketitle

\begin{abstract}Motivated by the inapproximability of reconfiguration problems,
we present a new PCP-type characterization of $\PSPACE$, which we call 
a \emph{probabilistically checkable reconfiguration proof} (PCRP):
Any $\PSPACE$ computation can be encoded into an exponentially long sequence of polynomially long proofs
such that 
every adjacent pair of the proofs differs in at most one bit,
and every proof can be probabilistically checked by reading a constant number of bits.

Using the new characterization,  we prove $\PSPACE$-completeness of approximate versions of many reconfiguration problems,
such as the \prb{Maxmin $3$-SAT Reconfiguration} problem.
This resolves the open problem posed by {Ito, Demaine, Harvey, Papadimitriou, Sideri, Uehara, and Uno}~(ISAAC 2008; Theor.~Comput.~Sci.~2011)
as well as the Reconfiguration Inapproximability Hypothesis by Ohsaka (STACS 2023) affirmatively.
We also present $\PSPACE$-completeness of approximating the \prb{Maxmin Clique Reconfiguration} problem
to within a factor of $n^\epsilon$ for some constant $\epsilon > 0$.

\end{abstract}

\clearpage

\tableofcontents

\clearpage

\section{Introduction}

\emph{Reconfiguration problems} ask to decide whether there exists a sequence of operations that 
transform one feasible solution to another.  A canonical example is the \prb{3-SAT Reconfiguration} problem,
which is known to be $\PSPACE$-complete \cite{gopalan2009connectivity}.
\begin{definition}
    [\prb{$3$-SAT Reconfiguration} \cite{gopalan2009connectivity}]
Given a $3$-CNF formula $\phi$ and 
its two satisfying assignments $\sigma^\sss$ and $\sigma^\ttt$,
we are required to decide if
there is a sequence of satisfying assignments to $\phi$,
$(\sigma^{(1)}, \ldots, \sigma^{(T)})$,
such that
$\sigma^{(1)} = \sigma^\sss$,
$\sigma^{(T)} = \sigma^\ttt$, and
$\sigma^{(t)}$ and $\sigma^{(t+1)}$ differ in at most one variable for every
$t \in \{1, \ldots, T-1\}$.
\end{definition}

\begin{example}
Suppose we are given a $3$-CNF formula
$\phi \coloneq (\bar{x_1} \vee \bar{x_2} \vee x_3) \wedge (\bar{x_1} \vee x_2 \vee \bar{x_3}) \wedge (x_1 \vee \bar{x_2} \vee \bar{x_3})$
made up of three clauses over three variables $x_1,x_2,x_3$ and
its two satisfying assignments
$\sigma_1^\sss \coloneq (1,0,0)$ and
$\sigma_1^\ttt \coloneq (0,1,0)$.
Then,
$(\phi, \sigma_1^\sss, \sigma_1^\ttt)$ is a YES instance of \prb{$3$-SAT Reconfiguration}:
there exists a sequence
$((1,0,0), (0,0,0), (0,1,0))$ from $\sigma_1^\sss$ to $\sigma_1^\ttt$ that
meets the requirement.
On the other hand,
if we are given a pair of two satisfying assignments
$\sigma_2^\sss \coloneq (1,0,0)$ and
$\sigma_2^\ttt \coloneq (1,1,1)$,
then $(\phi, \sigma_2^\sss, \sigma_2^\ttt)$ is a NO instance because
any sequence from $\sigma_2^\sss$ to $\sigma_2^\ttt$ must run through
$(0,1,1)$, $(1,0,1)$, or $(1,1,0)$,
neither of which satisfy $\phi$.
\end{example}

It is natural to consider its approximate variant, whose complexity was posed as an open problem in \cite{ito2011complexity}. 

\begin{definition}
    [\prb{Maxmin $3$-SAT Reconfiguration} \cite{ito2011complexity}]
Given a $3$-CNF formula $\phi$ over $m$ clauses $C_1, \ldots, C_m$ and
its two satisfying assignments $\sigma^\sss$ and $\sigma^\ttt$,
we are required to find a sequence of assignments,
$(\sigma^{(1)}, \ldots, \sigma^{(T)})$,
such that
$\sigma^{(1)} = \sigma^\sss$,
$\sigma^{(T)} = \sigma^\ttt$,
$\sigma^{(t)}$ and $\sigma^{(t+1)}$ differ in at most one variable for every
$t \in \{1, \ldots, T-1\}$, and
the following objective value is maximized:
\begin{align}
    \min_{1 \leq t \leq T}
    \frac{1}{m} \left|\Bigl\{
        j \in \{1, \ldots, m\} \Bigm| \sigma^{(t)} \text{ satisfies } C_j
    \Bigr\}\right|.
\end{align}
\end{definition}
\begin{example}
Consider again the same $3$-CNF formula $\phi$ and its two satisfying assignments
$\sigma_2^\sss = (1,0,0)$ and $\sigma_2^\ttt = (1,1,1)$.
There is a sequence
$((1,0,0), (1,1,0), (1,1,1))$
from $\sigma_2^\sss$ to $\sigma_2^\ttt$,
which is a feasible solution to \prb{Maxmin $3$-SAT Reconfiguration} and
whose objective value is $\frac{2}{3}$.
\end{example}

The main contribution of this paper is to prove $\PSPACE$-completeness of approximating the \prb{Maxmin $3$-SAT Reconfiguration} problem within a constant factor, which answers the open problem of \cite{ito2011complexity}.
In what follows, we present the background of this result and then the details of our results.

\subsection{Background}
Given a \emph{source problem} that asks the existence of a feasible solution,
\emph{reconfiguration problems} are defined as a problem
of deciding the existence of 
a \emph{reconfiguration sequence}, that is, 
a step-by-step transformation between a pair of feasible solutions while always preserving the feasibility of solutions.
For example,
\prb{$3$-SAT Reconfiguration} \cite{gopalan2009connectivity}
is defined from \prb{$3$-SAT} as a source problem.
Many reconfiguration problems can be defined from
Boolean satisfiability, constraint satisfaction problems, graph problems, and others.
Studying reconfiguration problems
may help elucidate the structure of the solution space \cite{gopalan2009connectivity},
which is motivated by, e.g.,
the application to the behavior analysis of SAT solvers,
such as DPLL \cite{achlioptas2004exponential}.
From a different point of view,
reconfiguration problems may date back to motion planning \cite{hopcroft1984complexity} and
classical puzzles, including 15 puzzles \cite{johnson1879notes} and Rubik's Cube.

Typically,
a reconfiguration problem becomes $\PSPACE$-complete if its source problem is intractable (say, $\NP$-complete);
e.g., \prb{$3$-SAT} \cite{gopalan2009connectivity},
\prb{Independent Set} \cite{hearn2005pspace,hearn2009games},
\prb{Set Cover} \cite{ito2011complexity}, and
\prb{$4$-Coloring} \cite{bonsma2009finding}.
On the other hand,
a source problem in $\P$ frequently leads to a reconfiguration problem in $\P$,
e.g.,
\prb{Matching} \cite{ito2011complexity} and
\prb{$2$-SAT} \cite{gopalan2009connectivity}.
Some exceptions are known:
whereas \prb{$3$-Coloring} is $\NP$-complete,
its reconfiguration problem is solvable in polynomial time \cite{cereceda2011finding};
\prb{Shortest Path} on a graph is tractable,
but its reconfiguration problem is $\PSPACE$-complete \cite{bonsma2013complexity}.
We refer the readers to the surveys by
\citet{nishimura2018introduction,heuvel13complexity} for algorithmic and hardness results and
the Combinatorial Reconfiguration wiki \cite{hoang2023combinatorial} for an exhaustive list of related articles.

A common way to cope with intractable problems is to consider approximation problems.
Relaxing the feasibility of intermediate solutions,
we can formalize \emph{approximate variants} for reconfiguration problems,
which are also motivated by the situation wherein
there does not exist a reconfiguration sequence for the original decision problem.
For example, in \prb{Maxmin $3$-SAT Reconfiguration} \cite{ito2011complexity},
we are allowed to include any \emph{non-satisfying} assignment in a reconfiguration sequence, but 
required to maximize the \emph{minimum fraction of satisfied clauses}.
Solving this problem may result in a reasonable reconfiguration sequence
consisting of \emph{almost-satisfying} assignments, e.g.,
each violating at most $1\%$ of clauses.
Intriguingly, a different trend regarding the approximability has been observed
between a source problem and its reconfiguration analogue; e.g.,
\prb{Set Cover} is $\NP$-hard to approximate within a factor better than $\ln n$
\cite{lund1994hardness,feige1998threshold,dinur2014analytical}, whereas
\prb{Minmax Set Cover Reconfiguration} admits a $2$-factor approximation algorithm \cite{ito2011complexity}.
Other reconfiguration problems whose approximability was investigated include:
\prb{Subset Sum Reconfiguration} has a PTAS \cite{ito2014approximability};
\prb{Submodular Reconfiguration} \cite{ohsaka2022reconfiguration} and
\prb{Power Supply Reconfiguration} \cite{ito2011complexity} are
constant-factor approximable.

Little is known about the hardness of approximation for reconfiguration problems.
Using the fact that source problems (e.g., \prb{Max $3$-SAT}) are $\NP$-hard to approximate \cite{arora1998probabilistic,arora1998proof},
\citet{ito2011complexity} proved that 
several reconfiguration problems (e.g., \prb{Maxmin $3$-SAT Reconfiguration}) are $\NP$-hard to approximate;
however, most reconfiguration problems are in $\PSPACE$, and thus their $\NP$-hardness results are not optimal.
It was left open to improve the $\NP$-hardness results to $\PSPACE$-hardness.
We here stress the significance of showing $\PSPACE$-hardness compared to $\NP$-hardness:
\begin{enumerate}
    \item $\PSPACE$-hardness is \emph{tight} because most reconfiguration problems belong to $\PSPACE$ \cite{nishimura2018introduction};
    \item it disproves the existence of a polynomial-length witness (in particular, a polynomial-length reconfiguration sequence)
    assuming $\NP \neq \PSPACE$;
    \item it rules out any polynomial-time algorithm under the weak assumption that $\P \neq \PSPACE$.
\end{enumerate}

In order to improve the $\NP$-hardness of approximation to $\PSPACE$-hardness of approximation,
it is crucial to develop a reconfiguration analogue of the PCP theorem \cite{arora1998probabilistic,arora1998proof}.
%
As indicated by the gap in the approximation factors of \prb{Set Cover} and its reconfiguration counterpart,
the required theory must be different and tailored to $\PSPACE$.
\citet{ohsaka2023gap} recently postulated a reconfiguration analogue of the PCP theorem, called the \emph{Reconfiguration Inapproximability Hypothesis} (RIH), under which 
a bunch of popular reconfiguration problems are shown to be $\PSPACE$-hard to approximate.
The major open question is whether RIH holds.

\subsection{Our Results}
Our contribution is to 
present a new PCP-type characterization of $\PSPACE$, which we call a
\emph{probabilistically checkable reconfiguration proof} (PCPR), and 
thereby affirmatively resolve the open problem posed by \citet{ito2011complexity}
and confirm RIH of \citet{ohsaka2023gap}.

Our characterization of $\PSPACE$ encodes
any $\PSPACE$ computation into
an exponentially long reconfiguration sequence of polynomial-length proofs,
each of which can be probabilistically checked by reading a constant number of bits.
A \emph{reconfiguration sequence from $\pi^\sss$ to $\pi^\ttt$ over $\{0, 1\}^n$} is a sequence $( \pi^{(1)}, \cdots, \pi^{(T)} ) \in (\{0,1\}^n)^*$
such that 
$\pi^\sss = \pi^{(1)}$, $\pi^\ttt = \pi^{(T)}$, and
$\pi^{(t)}$ and $\pi^{(t+1)}$ differ in at most one bit for every $t \in \{1, \cdots, T - 1\}$.

\begin{theorem}[Probabilistically Checkable Reconfiguration Proof (PCRP); see also \cref{thm:reconfPCP}]
\label{thm:reconfPCP in intro}
A language $L$ is in $\PSPACE$ if and only if 
there exist a randomized polynomial-time verifier $V$
with randomness complexity $\bigO(\log n)$ and query complexity $\bigO(1)$
on inputs of length $n$
and polynomial-time algorithms $\pi^\sss$ and $\pi^\ttt$
with the following properties:
\begin{enumerate}
    \item \textup{(Completeness)} If $x \in L$, then there exists a reconfiguration sequence $( \pi^{(1)}, \cdots, \pi^{(T)} )$ from $\pi^\sss(x)$ to $\pi^\ttt(x)$ over $\{0,1\}^{\poly(n)}$ such that 
    for every $t \in \{1, \cdots, T\}$,
    \[
      \Pr \Bigl[ V^{\pi^{(t)}}(x) = 1 \Bigr] = 1.
    \]
    \item \textup{(Soundness)} If $x \not\in L$, then for every reconfiguration sequence $( \pi^{(1)}, \cdots, \pi^{(T)} )$
    from $\pi^\sss(x)$ to $\pi^\ttt(x)$,
    for some $t \in \{1, \cdots, T \}$,
    \[
      \Pr \Bigl[ V^{\pi^{(t)}}(x) = 1 \Bigr] < \frac{1}{2}.
    \]
\end{enumerate}
Here, 
$V^{\pi^{(t)}}(x)$ denotes the output of $V$ on input $x$ given oracle access to $\pi^{(t)}$,
and  the probabilities are over the $\bigO(\log n)$ random bits of the verifier $V$.
\end{theorem}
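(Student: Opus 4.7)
The plan is to prove both directions of the characterization, with the hard direction carrying the main content.

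For the easy direction (PCRP implies $\PSPACE$), given a PCRP verifier $V$ with proof length $\ell = \poly(n)$, membership in the set of \emph{good} proofs $\{\pi \in \{0,1\}^\ell : \Pr[V^\pi(x)=1] = 1\}$ is decidable in deterministic polynomial time by enumerating the $\poly(n)$ random seeds and running $V$ on each. To decide $x \in L$, I would test reachability of $\pi^\ttt(x)$ from $\pi^\sss(x)$ in the (exponentially large) graph whose vertices are good proofs and whose edges join pairs at Hamming distance at most one. Reachability in a graph with $2^{\poly(n)}$ vertices and polynomial-time vertex and edge tests lies in $\PSPACE$ by Savitch's theorem, and completeness and soundness of the PCRP ensure this decision matches $x \in L$.

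For the hard direction ($\PSPACE$ implies PCRP), the plan is to develop a reconfiguration analogue of the PCP theorem. I would start from a canonical $\PSPACE$-complete reconfiguration problem such as \prb{$3$-SAT Reconfiguration}. Its naive verifier---pick a random clause and read its three variables---already yields a weak PCRP with perfect completeness and only an inverse-polynomial soundness gap of order $1/m$. The goal is then to amplify this gap to a constant while preserving $\bigO(\log n)$ randomness, $\bigO(1)$ queries, and, crucially, the reconfiguration property that any single-bit change in the base proof can be realized by a short sequence of single-bit changes in the amplified proof, each of which still passes the check with probability $1$.

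The technical heart is this gap-amplification step, which could be modeled either on Dinur-style expander-walk powering combined with alphabet reduction, or on an algebraic low-degree-testing construction. Each amplification and composition layer must come equipped with an \emph{internal reconfiguration procedure}: a canonical way to rewrite the amplified proof bit-by-bit in response to a single-bit change in the base proof, such that every intermediate proof remains $1$-accepting. This suggests defining a reconfiguration analogue of PCPs of proximity, so that the verifier's local views remain consistent throughout the rewriting.

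The main obstacle is precisely this reconfiguration-compatibility. Standard PCP transformations are oblivious to local changes and typically alter many bits of the amplified proof in response to a single bit-flip of the base proof, which would break completeness in the middle of the sequence. Designing amplification and composition operations that admit such incremental rewriting, and verifying that the induced intermediate proofs never fall below the $1/2$-soundness threshold, is the crux of the construction and the main technical contribution.
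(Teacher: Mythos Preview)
Your easy direction is correct and matches the paper's argument.

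Your hard direction, however, is precisely the approach the paper explicitly rejects. The authors note that developing reconfiguration counterparts of Dinur's gap-amplification steps ``appears to be very difficult'': the known reconfiguration-compatible degree reduction yields degree $\poly(1/\epsilon)$ rather than $\bigO(1)$, so the iteration does not close. Your proposal correctly identifies reconfiguration-compatibility of amplification as the crux, but then stops there without resolving it; as written, it is a research plan, not a proof.

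The paper's actual route bypasses gap amplification entirely. It starts not from \prb{$3$-SAT Reconfiguration} but from \prb{Succinct Graph Reachability}: given a circuit $S$ describing an exponential graph, decide whether $1^n$ is reachable from $0^n$. A configuration is a pair of vertices $(x,y)$ encoded by a locally testable code, together with a PCPP proof that $(x,y)$ is an edge. The verifier runs the local tester on each encoded vertex, and a \emph{smooth} PCPP verifier (used as a black box, already with constant soundness gap) on the whole string. The reconfiguration-compatibility you worry about is handled by a single device: a third symbol $\bot$. To move from one edge to an adjacent one, you overwrite one encoded vertex by $\bot$'s symbol-by-symbol, freely rewrite the PCPP proof while that vertex is all-$\bot$ (the verifier is designed to accept whenever it sees enough $\bot$'s), then fill in the new codeword. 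Perfect completeness is maintained throughout; soundness comes from the local tester plus PCPP plus smoothness bounding the probability that $\bot$'s are queried. A final alphabet reduction from $\{0,1,\bot\}$ to $\{0,1\}$ and constant repetition give the stated parameters.

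So the missing idea is: do not amplify a weak reconfiguration gap; instead, import the constant gap from standard (non-reconfiguration) PCPP, and engineer the proof format with an ``under construction'' symbol so that local rewrites never violate the verifier.
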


The verifier $V$ can be regarded as a $\forall \cdot \co\RP$-type verifier:
The verifier co-nondeterministically guesses $t \in \{1, \cdots, T\}$
and probabilistically checks the $t$-th proof $\pi^{(t)}$.
This verifier should be compared with the standard $\co\RP$-type PCP verifier $V'$ for $\PSPACE$-complete problems, which can be obtained from the PCP theorem for $\NEXP \supseteq \PSPACE$ \cite{BabaiFL91_cc_journals}.
The number of random bits used by $V'$ is $n^{\Theta(1)},$ 
whereas the number of random bits of our verifier $V$ is $\bigO(\log n)$.
The latter is crucial for the application to inapproximability of reconfiguration problems.
The standard verifier $V'$ uses only random bits, whereas our verifier $V$ co-nondeterministically guesses $t$.
Given that $V'$ does not use any nondeterministic choice,
it is natural to wonder whether $V$ can be improved to a $\co\RP$-type verifier that chooses $t \in \{1, \cdots, T\}$ randomly;
however, such an extension is impossible (see  \cref{obs:impossible}), and thus our characterization is one of the ``best'' characterizations in this direction.

%

As a corollary of \cref{thm:reconfPCP} and \cite{ohsaka2023gap,ohsaka2024gap},
we obtain that a host of reconfiguration problems are $\PSPACE$-complete to approximate. 
\begin{corollary}[from \cref{thm:reconfPCP} and \cite{ohsaka2023gap,ohsaka2024gap}]
\label{cor:RIH}
For some universal constant $\epsilon_0 \in (0,1)$,
the following approximate variants of reconfiguration problems are 
$\PSPACE$-hard to approximate within a factor of $(1-\epsilon_0)$.
\begin{itemize}
\item \prb{Maxmin $k$-SAT Reconfiguration} for all $k \geq 2$;
\item \prb{Maxmin $q$-CSP Reconfiguration} for all $q \geq 2$;
\item \prb{Maxmin Independent Set Reconfiguration} on bounded-degree graphs;
\item \prb{Minmax Vertex Cover Reconfiguration} on bounded-degree graphs;
\item \prb{Maxmin Clique Reconfiguration};
\item \prb{Minmax Dominating Set Reconfiguration};
\item \prb{Minmax Set Cover Reconfiguration};
\item \prb{Maxmin Nondeterministic Constraint Logic}.
\end{itemize}
\end{corollary}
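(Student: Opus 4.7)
The plan is to reduce the corollary to two ingredients: the PCRP theorem (\cref{thm:reconfPCP in intro}) and the suite of gap-preserving reductions from \cite{ohsaka2023gap,ohsaka2024gap}. First I would pick any $\PSPACE$-complete language $L$ and read off the promised verifier $V$ with randomness complexity $R(n)=\bigO(\log n)$ and query complexity $q=\bigO(1)$, together with the two polynomial-time-computable proofs $\pi^\sss(x), \pi^\ttt(x) \in \{0,1\}^N$ for $N=\poly(n)$. The bits of a proof are naturally identified with Boolean variables; enumerating all $2^{R(n)} = \poly(n)$ random strings $r$, each $r$ determines $q$ query positions and an acceptance predicate $f_r \colon \{0,1\}^q \to \{0,1\}$, which I would turn into a single $q$-ary constraint $C_r$ of a CSP instance $\Psi_x$ on $N$ variables. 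The start and goal assignments of the CSP reconfiguration instance are simply $\pi^\sss(x)$ and $\pi^\ttt(x)$, and a reconfiguration sequence of assignments that flips one variable at a time is exactly a reconfiguration sequence of proofs in the sense of \cref{thm:reconfPCP in intro}.

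The completeness of \cref{thm:reconfPCP in intro} then gives, for $x \in L$, a reconfiguration sequence along which every assignment satisfies every constraint of $\Psi_x$, so the \prb{Maxmin $q$-CSP Reconfiguration} value of $(\Psi_x, \pi^\sss(x), \pi^\ttt(x))$ is $1$. Soundness gives, for $x \notin L$, that every reconfiguration sequence contains some intermediate assignment rejected with probability at least $1/2$, equivalently violating at least a $1/2$ fraction of the constraints, so the value is at most $1/2$. This already yields $\PSPACE$-hardness of distinguishing optimum $1$ from optimum $1/2$ for \prb{Maxmin $q$-CSP Reconfiguration}, hence $\PSPACE$-hardness of approximation within a factor $(1-\epsilon_0)$ for any constant $\epsilon_0 \in (0,\tfrac{1}{2})$, covering that bullet of the corollary.

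To reach the remaining entries I would invoke the gap-preserving reductions catalogued in \cite{ohsaka2023gap,ohsaka2024gap} as black boxes. Those works set up exactly the chain of reductions from gap-\prb{Maxmin $q$-CSP Reconfiguration} to \prb{Maxmin $k$-SAT Reconfiguration}, \prb{Maxmin Independent Set Reconfiguration} and \prb{Minmax Vertex Cover Reconfiguration} on bounded-degree graphs, \prb{Maxmin Clique Reconfiguration}, \prb{Minmax Dominating Set Reconfiguration}, \prb{Minmax Set Cover Reconfiguration}, and \prb{Maxmin Nondeterministic Constraint Logic}, conditionally on the existence of a constant inapproximability gap for the CSP problem. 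With that gap now established unconditionally by \cref{thm:reconfPCP in intro}, each target problem inherits a (possibly smaller) universal constant inapproximability factor, and taking the minimum of the finitely many resulting constants yields one uniform $\epsilon_0 > 0$ that works for every bullet simultaneously.

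The main obstacle is therefore not in the corollary itself but in \cref{thm:reconfPCP in intro}: once the PCRP theorem is in hand, the present result is a compilation step. The only delicate point is verifying that the reductions of \cite{ohsaka2023gap,ohsaka2024gap} accept gap-\prb{Maxmin $q$-CSP Reconfiguration} in the exact form $(q = \bigO(1),\text{ gap } 1\text{ vs.\ }1/2)$ that our PCRP verifier produces; any small mismatch in arity or gap constant would be absorbed by a standard constraint-merging or gap-amplification step, so I do not expect this bookkeeping to be a genuine obstruction.
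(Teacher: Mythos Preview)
Your proposal is correct and follows essentially the same approach as the paper. The paper packages the verifier-to-CSP conversion you describe as a separate proposition (\cref{prp:PCP-CSP}) and names the resulting gap statement as the Reconfiguration Inapproximability Hypothesis (\cref{thm:RIH}), then defers to \cite{ohsaka2023gap,ohsaka2024gap} for the remaining bullets exactly as you do; the corollary itself receives no further proof beyond this compilation.
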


Moreover, we improve an inapproximability factor of \prb{Maxmin Clique Reconfiguration} to a polynomial; that is,
\prb{Maxmin Clique Reconfiguration} is $\PSPACE$-hard to approximate
within a factor of $n^\epsilon$ for some constant $\epsilon > 0$,
where $n$ is the number of vertices
(\cref{thm:clique}).
This is the first polynomial-factor inapproximability result for
approximate variants of reconfiguration problems (to the best of our knowledge).\footnote{
See \cref{sec:related} for discussion of existing hardness-of-approximation results
for reconfiguration problems.
}

\section{Proof Overview}

Here, we present a proof sketch of  \cref{thm:reconfPCP in intro}.

A na\"ive attempt for the proof of \cref{thm:reconfPCP in intro}
would be to develop reconfiguration counterparts for 
the simple proof of the PCP theorem by \citet{dinur2007pcp}.
The proof of the PCP theorem consists of repeated applications of the three steps --- a degree reduction (the preprocessing lemma \cite[Lemma 1.9]{dinur2007pcp}), gap amplification and an alphabet reduction.
Counterparts of some of the steps have been developed in the recent literature of reconfiguration problems
\cite{ohsaka2023gap,ohsaka2024gap}.
For example, \citet{ohsaka2023gap} presented a degree reduction for reconfiguration problems, i.e., 
a reduction that converts a graph that represents a PCRP (probabilistically checkable reconfiguration proof) system 
with soundness error $1 - \epsilon$
into another graph whose degree $\Delta$ is small.
However, the parameter achieved in \cite{ohsaka2023gap} is weaker than that of \cite{dinur2007pcp,papadimitriou1991optimization}:
$\Delta \le \poly(1 / \epsilon)$.
If $\epsilon = o(1)$, the degree $\Delta$ can be $\omega(1)$, which is not sufficient for Dinur's proof to go through.
It appears to be very difficult to construct PCRPs
based on this approach.

Our actual approach is much simpler.  We use existing machinery developed in the literature of PCP theorems in a black-box way.
The main ingredient for our proof is the \emph{PCP of Proximity} (PCPP) \cite{dinur2006assignment,bensasson2006robust}.  
A PCPP for a language $L \in \NP$ allows us to approximately verify that $x \in L$
by reading a constant number of bits from $x$ and a proof.
In particular, by encoding $x$ by an error-correcting code,
we can reliably check whether $x \in L$ efficiently.

To construct a PCRP for every problem in $\PSPACE$, it suffices to construct a PCRP for some $\PSPACE$-complete problem.
We consider the $\PSPACE$-complete problem called  \prb{Succinct Graph Reachability}.
In what follows, we first explain how this problem can be regarded as a reconfiguration problem,
and then explain how to construct a PCRP system for \prb{Succinct Graph Reachability}.

\subsection{\prb{Succinct Graph Reachability} as Reconfiguration Problems}

\prb{Succinct Graph Reachability} is the following problem.
The input consists of a circuit which succinctly represents an exponentially large graph $G = (V, E)$
and two vertices $v^\sss$ and $v^\ttt \in V$,
and the task is to decide whether there exists a path from $v^\sss$ to $v^\ttt$ in $G$.
Each vertex is represented by an $n$-bit string; i.e., $V = \{0,1\}^n$.
For simplicity of notation, throughout this section, we assume that every vertex in $G$ has a self-loop, i.e., $(x, x) \in E$ for every $x \in V$.
For two strings $x$ and $y$, we denote by $x \circ y$ the concatenation of $x$ and $y$.

\prb{Succinct Graph Reachability} can be naturally regarded as the following reconfiguration problem.
Given a (succinctly described) graph $G$ and two vertices $v^\sss, v^\ttt \in V$,
the task is to decide whether 
there exists a sequence $(x_1 \circ y_1, \cdots, x_T \circ y_T) \in (\{0, 1\}^{2 n})^*$ from $v^\sss \circ v^\sss$ to $v^\ttt \circ v^\ttt$
such that 
\begin{enumerate}
    \item every configuration $x_t \circ y_t \in \{0, 1\}^{2 n}$ satisfies the constraint that $(x_t, y_t) \in E$, and 
    \item each adjacent pair of configurations satisfy $x_t = x_{t+1}$ or $y_t = y_{t+1}$.
\end{enumerate}
In other words, this is the reconfiguration problem which asks to decide whether
the token that initially placed at the edge $(v^\sss, v^\sss)$
can be moved to the edge $(v^\ttt, v^\ttt)$
by a sequence of operations that move the token from an edge to one of its adjacent edges.

In \cref{thm:reconfPCP in intro}, each adjacent pair of proofs differs in at most one bit.  In terms of reconfiguration problems, this means that the operations which we are allowed to perform are to change \emph{one bit} of a configuration instead of \emph{one vertex} of the token placed at an edge.
By introducing a special symbol ``$\bot$'', we can regard \prb{Succinct Graph Reachability} as the following reconfiguration problem
in which operations are restricted to changing one bit of configurations:
Given a (succinctly described) graph $G$ and two vertices $v^\sss, v^\ttt \in V$,
the task is to decide whether 
there exists a sequence $(x_1 \circ y_1, \cdots, x_T \circ y_T) \in (\{0, 1, \bot\}^{2 n})^*$ from $v^\sss \circ v^\sss$ to $v^\ttt \circ v^\ttt$
such that 
\begin{enumerate}
    \item 
    \label{item:reconfiguration constraint}
    $(x_t, y_t) \in E$ or ($x_t \in \{0,1\}^n$ and $y_t \in \{0,1\}^n$), and
    \item each adjacent pair $(x_t \circ y_t, x_{t+1} \circ y_{t+1})$ differs in at most one position.
\end{enumerate}
Informally, the existence of the symbol $\bot$ indicates that we are on the way of the transition,
and we are allowed to include $\bot$ in at most one of $x_t$ or $y_t$
(that is, we do not allow to change both vertices of the token simultaneously).
This reconfiguration problem ``simulates'' \prb{Succinct Graph Reachability} in the following sense:
If a token placed at an edge $(x, y_1) \in E$ is moved to another edge $(x, y_2) \in E$,
then
in the new reconfiguration problem,
we may consider a sequence of operations that first transform $x \circ y_1$ into $x \circ \bot^n$
by replacing each bit of $y_1$ with $\bot$ one by one,
and then transform $x \circ \bot^n$ into $x \circ y_2$
by replacing $\bot$ with a bit of $y_2$ one by one.



\subsection{PCRP System for \prb{Succinct Graph Reachability}}

The main idea for constructing a PCRP for \prb{Succinct Graph Reachability}
is to probabilistically check  \cref{item:reconfiguration constraint}, i.e., the condition that 
$(x_t, y_t) \in E$ or ($x_t \in \{0,1\}^n$ and $y_t \in \{0,1\}^n$),
by reading a constant number of bits.
To this end, we encode each vertex by a locally testable error-correcting code $\Enc \colon \{0,1\}^n \to \{0,1\}^\ell$
and use the PCPP to check whether the encoded pair of vertices $(x_t, y_t)$ satisfies $(x_t, y_t) \in E$.
Specifically, let $V_{\mathrm{PCPP}}$ be a PCPP verifier for the language $L_G = \{ \Enc(x) \circ \Enc(y) \mid (x, y) \in E \}$.
This verifier takes random access to $f \circ g$ and a proof $\pi \in \{0,1\}^p$ and checks whether $f \circ g$ is close to some $\Enc(x) \circ \Enc(y) \in L_G$.
Then, 
given a sequence of proofs $(\sigma^{(1)}, \cdots, \sigma^{(T)}) \in (\{0, 1, \bot\}^{2 \ell + p})^*$,
we probabilistically check each proof $\sigma^{(t)} = f \circ g \circ \pi$ as follows:
\begin{enumerate}
    \item  Using the local tester for $\Enc$, we check that both $f$ and $g$ are close to some codewords $\Enc(x)$ and $\Enc(y)$, respectively.  If both are far from codewords, then we reject.
    \item By random sampling, we test whether either $f$ or $g$ contains many $\bot$ symbols.  If so, we accept.  
    \item Finally, by running $V_\mathrm{PCPP}$ for $(f \circ g, \pi)$, we check that $(x, y) \in E$.
    We accept if and only if $V_\mathrm{PCPP}$ accepts.  
\end{enumerate}
The first item ensures that 
either $f$ or $g$ is close to some codewords $\Enc(x)$ and $\Enc(y)$.
The second item checks whether we are on the way of the transition from one edge to another, in which case we accept.
We run the test of the third item only if either $f$ or $g$ is close to some codewords, and both $f$ and $g$ do not contain many $\bot$ symbols.
Using the PCPP, we check that $f$ and $g$ encode $x$ and $y$ such that $(x, y) \in E$.

We note that the size of alphabets $\{0, 1, \bot\}$ of the PCRP system is $3$.
This can be reduced to $2$ by using a simple alphabet reduction of \citet{ohsaka2023gap},
which transforms any PCRP system with \emph{perfect completeness} over alphabets of constant size 
into a PCRP system over the binary alphabets $\{0, 1\}$.

It is thus important to make sure that the PCRP system has perfect completeness, i.e., in the YES case, the verifier accepts with probability 1.
For this reason, in the actual proof, we need to modify the PCPP verifier $V_\mathrm{PCPP}$ so that it immediately accepts if a $\bot$ symbol in $f$ or $g$ is queried by $V_\mathrm{PCPP}$.
Details can be found in \cref{sec:PCRP}.

%
%

\section{Related Work}
\label{sec:related}

Another characterization of $\PSPACE$ is
\emph{probabilistically checkable debate systems} due to
\citet{condon1995probabilistically},
which can be used to show \PSPACE-hardness of approximating
\prb{Quantified Boolean Formula} and the problem of selecting as many finite-state automata as possible that accept a common string.
These results are incomparable to our PCRP
because the underlying structure of the problems is different from each other.

We summarize approximate variants of reconfiguration problems whose inapproximability was investigated.
\prb{Maxmin Clique Reconfiguration} and \prb{Maxmin SAT Reconfiguration}
are $\NP$-hard to approximate~\cite{ito2011complexity}.
\prb{Shortest Path Reconfiguration} is
$\PSPACE$-hard to approximate with respect to its objective value \cite{gajjar2022reconfiguring}.
The obejctive value called the \emph{price}
is determined based on the number of vertices in a path changed at a time,
which is fundamentally different from those of reconfiguration problems listed in \cref{cor:RIH}.
\prb{Submodular Reconfiguration} is constant-factor inapproximable \cite{ohsaka2022reconfiguration},
whose proof resorts to inapproximability results of \prb{Submodular Function Maximization} \cite{feige2011maximizing}.

We note that approximability of reconfiguration problems frequently refers to
that of \emph{the shortest sequence}
\cite{kaminski2011shortest,miltzow2016approximation,bonamy2020shortest,ito2022shortest},
which seems orthogonal to the present study.

The \emph{pebble game} \cite{paterson1970comparative} is a single-player game,
which models the trade-off between the memory usage and running time of a computation and
is recently used in the context of proof complexity \cite{nordstrom2013pebble}.
This game can be thought of as a reconfiguration problem,
whose objective function, called the \emph{pebbling price},
is defined as the maximum number of pebbles at any time required to place a pebble to the unique sink.
The pebbling price is known to be $\PSPACE$-hard to approximate within an additive $n^{\frac{1}{3}-\epsilon}$ term for the graph size $n$ \cite{chan2015hardness,demaine2017inapproximability}.
We leave open whether our PCRPs can be used to derive
$\PSPACE$-hardness of approximating the pebbling price within a multiplicative factor.

\section{Preliminaries}

\subsection{Notations}
For a nonnegative integer $n \in \bbN$,
let $ [n] \coloneq \{1, 2, \ldots, n\} $.
A \emph{sequence} $\scrE$ of a finite number of elements $E^{(1)}, \ldots, E^{(T)}$
is denoted by $( E^{(1)}, \ldots, E^{(T)} )$, and
we write $E \in \scrE$ to indicate that $E$ appears in $\scrE$.
The symbol $\circ$ stands for a concatenation of two strings.
Let $\Sigma$ be a finite set called \emph{alphabet}.
For a length-$n$ string $f \in \Sigma^n$ and index set $I \subseteq [n]$,
we use $f|_I$ to denote the restriction of $f$ to $I$.
We write $0^n$ and $1^n$ for
$\underbrace{0 \cdots 0}_{n \text{ times}}$ and
$\underbrace{1 \cdots 1}_{n \text{ times}}$,
respectively.
The \emph{relative distance} between two strings $f,g \in \Sigma^n$,
denoted $\Delta(f,g)$,
is defined as the fraction of positions on which $f$ and $g$ differ; namely,
$
    \Delta(f,g)
    = \Pr_{i \sim [n]}[f_i \neq g_i ]
    = \frac{|\{i \in [n] \mid f_i \neq g_i \}|}{n}.
$
We say that
$f$ is \emph{$\epsilon$-close} to $g$ if 
$\Delta(f,g) \leq \epsilon$
and \emph{$\epsilon$-far} from $g$ if
$\Delta(f,g) > \epsilon$.
For a set of strings $S \subseteq \Sigma^n$, analogous notions are defined; e.g.,
$\Delta(f, S) \coloneq \min_{g \in S} \Delta(f,g)$ and
$f$ is $\epsilon$-close to $S$ if $\Delta(f,S) \leq \epsilon$.

\subsection{Error-Correcting and Locally Testable Codes}
Here, we introduce error-correcting and locally testable codes.

\begin{definition}[Error-correcting codes]
For any $\rho \in [0,1]$,
a function $\Enc \colon \{0,1\}^n \to \{0,1\}^\ell$
is an \emph{error-correcting code with relative distance} $\rho$
if
$\Delta(\Enc(\alpha), \Enc(\beta)) > \rho$ for every $\alpha \neq \beta \in \{0,1\}^n$.
We call $\Enc(\alpha)$ for each $\alpha \in \{0,1\}^n$ a \emph{codeword} of $\Enc$.
Denote by $\Enc(\cdot)$ the set of all codewords of $\Enc$.
\end{definition}

\begin{definition}[Locally testable codes; e.g., \citet{goldreich2006locally}]
For any $q \in \bbN$ and $\kappa > 0$,
an error-correcting code $\Enc \colon \{0,1\}^n \to \{0,1\}^\ell$ is said to be
\emph{$(q, \kappa)$-locally testable} if 
there exists a probabilistic polynomial-time algorithm $M$ that,
given oracle access to a string $f \in \{0,1\}^\ell$,
makes at most $q$ nonadaptive queries of $f$ and satisfies the following conditions:
\begin{itemize}
    \item (Completeness) If $f \in \Enc(\cdot)$, then 
    $M$ always accepts; namely,
    $
        \Pr[M^f \text{ accepts}] = 1.
    $
    \item (Soundness) If $f \notin \Enc(\cdot)$, then
    $M$ rejects with probability at least $\kappa \cdot \Delta(f, \Enc(\cdot))$; namely,
    $
        \Pr[M^f \text{ rejects}] \geq \kappa \cdot \Delta(f, \Enc(\cdot)).
    $
\end{itemize}
Such an algorithm $M$ is called a \emph{$(q, \kappa)$-local tester} for $\Enc$.
\end{definition}


\begin{theorem}[\cite{bensasson2006robust,bensasson2003randomness}]
\label{thm:LTC}
There exist $\rho, \kappa > 0$ and $q \in \bbN$ such that
for infinitely many $n$'s,
there exists a polynomial-time construction of 
a $(q,\kappa)$-locally testable error-correcting code
$\Enc_n \colon \{0,1\}^n \to \{0,1\}^{\ell(n)}$
with relative distance $\rho$ and $\ell(n) = n^{1+o(1)}$.
Moreover,
if the code $\Enc_n \colon \{0,1\}^n \to \{0,1\}^{\ell(n)}$ of the desired property exists for integer $n \in \bbN$,
then the next integer $n' \in \bbN$ for which
$\Enc_{n'} \colon \{0,1\}^{n'} \to \{0,1\}^{\ell(n')}$ exists is at most $n^{1+o(1)}$.
\end{theorem}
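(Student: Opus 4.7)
The plan is to invoke the existence of nearly linear-length locally testable codes from \cite{bensasson2006robust,bensasson2003randomness} in a black-box manner, and then observe that the set of admissible message lengths is dense enough to yield the ``moreover'' clause.

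First, I would cite the construction which, for every $n$ in a sufficiently dense subset of $\bbN$ (for instance every $n = 2^k$ with $k$ larger than some fixed $k_0$), produces in polynomial time a linear code $\Enc_n \colon \{0,1\}^n \to \{0,1\}^{\ell(n)}$ whose minimum relative distance is at least some absolute constant $\rho > 0$ and which admits a tester making a constant number $q$ of nonadaptive queries and rejecting any $f \in \{0,1\}^{\ell(n)}$ with probability at least $\kappa \cdot \Delta(f, \Enc_n(\cdot))$ for an absolute constant $\kappa > 0$. The block length achieved in those works is $\ell(n) = n \cdot \mathrm{polylog}(n)$, and since $\mathrm{polylog}(n) = n^{o(1)}$ we obtain $\ell(n) = n^{1+o(1)}$, which yields the first part of the theorem with the constants $\rho, \kappa, q$ independent of $n$.

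For the ``moreover'' clause I would argue density directly. Assume the construction applies to every $n$ of the form $2^k$ with $k \geq k_0$. Then for any such admissible $n = 2^k$, the next admissible length is at most $2^{k+1} = 2n$; writing $2 = n^{\log 2 / \log n}$ and using $\log 2 / \log n = o(1)$ as $n \to \infty$, we conclude $n' \leq 2n \leq n^{1+o(1)}$, as required. If one prefers to start from a code family defined over a slightly different parameter set (e.g., message lengths that are prime powers, or that arise from Reed--Solomon codes over a growing field), the same density argument applies provided that the gap between consecutive valid parameters is at most a polynomial factor.

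The main obstacle is bibliographical rather than mathematical: one must verify that a single construction in \cite{bensasson2006robust,bensasson2003randomness} simultaneously achieves constant query complexity, constant soundness slope $\kappa$, constant relative distance $\rho$, block length $n^{1+o(1)}$, and a dense enough set of admissible $n$'s. Tracking these parameters through the Reed--Solomon based PCPP machinery and the subsequent code-concatenation steps used in those references is straightforward but tedious; no new idea is needed beyond what the cited papers already supply.
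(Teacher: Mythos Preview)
The paper does not prove this statement at all: \cref{thm:LTC} is stated as a black-box result imported from \cite{bensasson2006robust,bensasson2003randomness}, with no accompanying proof or even proof sketch. Your proposal correctly recognizes this and does essentially the only reasonable thing, namely point to the cited constructions and verify that the claimed parameters (constant $q,\kappa,\rho$, block length $n^{1+o(1)}$, and density of admissible $n$) can be read off from them; the density argument you give for the ``moreover'' clause is fine and is exactly the kind of observation the authors are relying on implicitly (cf.\ the footnote after the theorem is invoked in \cref{subsec:main:verifier}).
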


\subsection{Probabilistically Checkable Proofs of Proximity}

We formally define the notion of \emph{verifier}.
\begin{definition}[Verifier]
A \emph{verifier} with
\emph{randomness complexity} $r \colon \bbN \to \bbN$ and
\emph{query complexity} $q \colon \bbN \to \bbN$
is a probabilistic polynomial-time algorithm $V$ that 
given an input $x \in \{0,1\}^*$,
tosses $r = r(|x|)$ random bits $R$ and use $R$ to
generate a sequence of $q = q(|x|)$ queries
$I = (i_1, \ldots, i_q)$ and
a circuit
$D \colon \{0,1\}^q \to \{0,1\}$.
We write $(I,D) \sim V(x)$ to denote
the random variable for a pair of the query sequence and circuit generated by $V$ on input $x \in \{0,1\}^*$.
Denote by $V^{\pi}(x) \coloneq D(\pi|_I)$
the output of $V$ on input $x$ given oracle access to a proof $\pi \in \{0,1\}^*$.
We say that $V(x)$ \emph{accepts} a proof $\pi$
if $V^{\pi}(x) = 1$; i.e., $D(\pi|_I) = 1$ for $(I,D) \sim V(x)$.
\end{definition}

We proceed to the definition of \emph{PCPs of proximity} \cite{bensasson2006robust}
(a.k.a.~\emph{assignment testers} \cite{dinur2006assignment}).
For any pair language $L \subseteq \{0,1\}^* \times \{0,1\}^*$,
we define $L(x) \coloneq \{y \in \{0,1\}^* \mid (x,y) \in L\}$ for an input $x \in \{0,1\}^*$.

\begin{definition}[PCP of proximity \cite{bensasson2006robust,dinur2006assignment}]
A \emph{PCP of proximity} (PCPP) verifier for a pair language $L \subseteq \{0,1\}^* \times \{0,1\}^*$
with
\emph{proximity parameter} $\delta \in (0,1)$ and
\emph{soundness error} $s \in (0,1)$
is a verifier $V$ such that
for every pair of an explicit input $x \in \{0,1\}^*$ and
an implicit input oracle $y \in \{0,1\}^*$,
the following conditions hold:
\begin{itemize}
    \item (Completeness)
    If $(x,y) \in L$, there exists a proof $\pi \in \{0,1\}^*$ such that
    $V(x)$ accepts $y \circ \pi$ with probability $1$; namely,
    \begin{align}
        \exists \pi \in \{0,1\}^*, \;\; \Pr_{(I,D) \sim V(x)}\Bigl[ D((y \circ \pi)|_I) = 1 \Bigr] = 1.
    \end{align}
    \item (Soundness)
    If $y$ is $\delta$-far from $L(x)$,
    for every alleged proof $\pi \in \{0,1\}^*$,
    $V(x)$ accepts $y \circ \pi$ with probability less than $s$; namely,
    \begin{align}
        \forall \pi \in \{0,1\}^*, \;\; \Pr_{(I,D) \sim V(x)}\Bigl[ D((y \circ \pi)|_I) = 1 \Bigr] < s.
    \end{align}
\end{itemize}
\end{definition}

Subsequently, we introduce smooth PCPP verifiers.
We say that a verifier is \emph{smooth}
if each position in its proof is equally likely to be queried.

\begin{definition}[Smoothness]
A PCPP verifier $V$ is \emph{smooth} if
$V$ queries each position in
implicit input $y$ and proof $\pi$
with equal probability; namely,
there exists $p \in (0,1)$ such that
\begin{align}
p = \Pr_{(I,D) \sim V(x)}\Bigl[i \in I\Bigr] 
\end{align}
for every position $i$ of $y \circ \pi$.
\end{definition}

\begin{theorem}[Smooth PCPP \cite{paradise2021smooth,bensasson2006robust}]
\label{thm:smooth-PCPP}
For every pair language $L$ in $\NP$ and every number $\delta \in (0,1)$,
there exists a smooth PCPP verifier $V$ for $L$ with
randomness complexity $r(n) = \bigO(\log n)$,
query complexity $q(n) = \bigO(1)$,
proximity parameter $\delta$, and 
soundness error $s = 1-\Omega(\delta)$.
Moreover, for every pair $(x,y) \in L$,
a proof $\pi \in \{0,1\}^{\poly(n)}$ such that $V(x)$ always accepts $y \circ \pi$ can be 
constructed in polynomial time.
\end{theorem}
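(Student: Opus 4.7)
The plan is to derive the claimed smooth PCPP by applying a standard balancing transformation to a non-smooth PCPP.  Start from a PCPP verifier $V_0$ for $L \in \NP$ given by the construction of Ben-Sasson et al., with randomness $O(\log n)$, query complexity $q_0 = O(1)$, proximity $\delta$, and soundness error $1-\Omega(\delta)$; write its implicit input and proof jointly as a string of total length $N$, and for each position $i \in [N]$ let $p_i := \Pr_R[i \in I_R]$ denote the probability that $V_0$ queries $i$.  Since $\sum_i p_i \leq q_0 = O(1)$, positions are queried with small probability on average, but $p_{\max}$ and $p_{\min}$ may differ substantially, and the goal is to flatten these probabilities while paying at most a constant-factor price in the soundness error.

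The first step is to equalize query probabilities on the proof side by \emph{duplication}: replace each proof position $j$ with $k_j = \lceil p_j / p_{\min} \rceil$ identical copies, so that when $V_0$ wants to query proof position $j$, the new verifier samples one of its $k_j$ copies uniformly.  Completeness is trivially preserved, since the honest prover supplies identical copies of the Ben-Sasson et al.\ proof; soundness is preserved up to constant factors because any inconsistency between copies can be caught by an additional constant-query equality test.  After this step every proof position is queried with probability within a constant factor of one another.  The second step equalizes the combined probabilities across the input and the stretched proof via \emph{dummy queries}, since positions of $y$ cannot be duplicated: with some probability $\alpha \in (0,1)$ the new verifier simulates the balanced verifier above, while with probability $1 - \alpha$ it samples a position $i$ according to the ``deficit distribution'' proportional to $p^\star - p_i$ (where $p^\star$ is the maximum current query probability) and issues a single dummy query whose value the decision circuit ignores while outputting $1$.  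An appropriate choice of $\alpha$ then makes every position queried with probability exactly $\alpha \cdot p^\star$.

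The main obstacle lies in the quantitative soundness bookkeeping.  The final acceptance probability in the soundness case is $\alpha \cdot \Pr[V_0 \text{ accepts}] + (1 - \alpha) \cdot 1 \leq 1 - \alpha \cdot \Omega(\delta)$, so recovering $1 - \Omega(\delta)$ soundness demands $\alpha = \Omega(1)$.  This in turn requires the total deficit $\sum_i (p^\star - p_i)$ to be $O(1)$, which forces the duplication blow-up to be uniformly controlled so that $p^\star$ does not inflate relative to the average $q_0/N$.  Arranging this asks for either a careful choice of duplication factors or starting from a robustified $V_0$ in which $p_{\max}/p_{\min}$ is already polynomially bounded; both ingredients are available in the literature.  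The moreover clause is then immediate: the honest prover is obtained from the Ben-Sasson et al.\ proof by the explicit replication pattern above, which is polynomial-time computable.
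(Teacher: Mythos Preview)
The paper does not prove this theorem; it is stated with citations to \cite{paradise2021smooth,bensasson2006robust} and used as a black box throughout \cref{sec:PCRP}. Your plan---replicate each proof symbol proportionally to its query probability and then pad with dummy accepting queries to flatten the distribution---is precisely the smoothing transformation carried out in the cited work of Paradise, so your approach matches the source the paper defers to.

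Two minor remarks. First, the extra equality test among duplicated copies is unnecessary for soundness: whatever values a cheating prover writes in the copies, the verifier's uniform choice among them is equivalent to the prover employing a randomized proof, which cannot beat the soundness guarantee against deterministic proofs. Second, your worry that $\alpha=\Omega(1)$ requires $p_{\max}/p_{\min}$ to be controlled is exactly the point where the argument needs the cited constructions rather than an arbitrary PCPP; you correctly defer this to the literature, but it is worth being explicit that the implicit-input positions (which cannot be duplicated) are the delicate case, and one must either start from a verifier that already queries $y$ near-uniformly or pad $y$ itself via repetition before invoking the PCPP.
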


\subsection{Constraint Satisfaction Problems}
Here, we review constraint satisfaction problems.
\begin{definition}
    A \emph{$q$-ary constraint system} over variable set $N$ and alphabet $\Sigma$ is 
    defined as a collection of $q$-ary \emph{constraints},
    $\Psi = (\psi_j)_{j \in [m]}$,
    where each constraint $\psi_j \colon \Sigma^N \to \{0,1\}$
    depends on $q$ variables of $N$; namely,
    there exist $i_1, \ldots, i_q \in N$ and $f \colon \Sigma^q \to \{0,1\}$ such that
    $\psi_j(\asgmt) = f(\asgmt(i_1), \ldots, \asgmt(i_q))$
    for every $\asgmt \colon N \to \Sigma$.
\end{definition}
\noindent
For an assignment $\asgmt \colon N \to \Sigma$,
we say that $\asgmt$ \emph{satisfies} constraint $\psi_j$ if
$\psi_j(\asgmt) = 1$, and
$\asgmt$ \emph{satisfies} $\Psi$ if it satisfies all constraints of $\Psi$.
Moreover, we say that $\Psi$ is \emph{satisfiable} if $\Psi$ is satisfied by some assignment.
For an assignment $\asgmt \colon N \to \Sigma$,
its \emph{value} is defined as the fraction of constraints of $\Psi$ satisfied by $\asgmt$; namely,
\begin{align}
    \val_\Psi(\asgmt) \coloneq \frac{1}{|\Psi|} \cdot
    \left|\Bigl\{\psi_j \in \Psi \Bigm| \asgmt \text{ satisfies } \psi_j \Bigr\}\right|.
\end{align}

We refer to the equivalence between a PCP system and \prb{Gap $q$-CSP}
(see, e.g., \cite[Section~11.3]{arora2009computational}),
whose proof is included for the sake of completeness.

\begin{proposition}
\label{prp:PCP-CSP}
Let $V$ be a verifier with
randomness complexity $\bigO(\log n)$,
query complexity $\bigO(1)$, and
alphabet $\Sigma$,
and let $x \in \{0,1\}^*$ be an input.
Then, one can construct in polynomial time
a constraint system $\Psi = (\psi_j)_{j \in [m]}$
over $\poly(|x|)$ variables and alphabet $\Sigma$
such that
$\val_\Psi(\pi) = \Pr[V^\pi(x) = 1]$
for every proof $\pi \in \Sigma^{\poly(|x|)}$.

On the other hand,
for a $q$-ary constraint system $\Psi$
over variable set $N$ and alphabet $\Sigma$,
one can construct in polynomial time a verifier $V$ with
randomness complexity $\bigO(\log n)$,
query complexity $\bigO(1)$, and
alphabet $\Sigma$
such that
$\Pr[V^\asgmt = 1] = \val_\Psi(\asgmt)$
for every assignment $\asgmt \colon N \to \Sigma$.
\end{proposition}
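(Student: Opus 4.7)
The plan is to give two direct translations that exploit the natural correspondence ``one random string $\leftrightarrow$ one constraint'' underlying both objects, so that the verifier's acceptance probability matches exactly the CSP-value on any given assignment.

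For the $(\Rightarrow)$ direction, I would take the variables of $\Psi$ to be the positions of the alleged proof, $N = [\poly(|x|)]$, and enumerate all $2^{r(|x|)} = \poly(|x|)$ random strings of $V$, which is possible in polynomial time since the randomness complexity is $\bigO(\log |x|)$. For each $R \in \{0,1\}^{r(|x|)}$, simulating $V(x)$ on randomness $R$ yields a query tuple $I_R = (i_{R,1}, \ldots, i_{R,q})$ and a predicate circuit $D_R \colon \Sigma^q \to \{0,1\}$. I would introduce one constraint $\psi_R(\pi) \coloneq D_R(\pi(i_{R,1}), \ldots, \pi(i_{R,q}))$ per $R$, which depends on the $q = \bigO(1)$ positions listed in $I_R$; since $R$ is uniform over $\{0,1\}^{r(|x|)}$,
\[
  \val_{\Psi}(\pi) \;=\; \Pr_{R}\bigl[\psi_R(\pi) = 1\bigr] \;=\; \Pr\bigl[V^\pi(x) = 1\bigr]
\]
for the $q$-ary CSP $\Psi \coloneq (\psi_R)_R$ over $\poly(|x|)$ variables and alphabet $\Sigma$.

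For the $(\Leftarrow)$ direction, given a $q$-ary CSP $\Psi = (\psi_j)_{j \in [m]}$ over $N$ and $\Sigma$, I would design $V$ to sample an index $j \in [m]$ uniformly using $\lceil \log m \rceil = \bigO(\log n)$ random bits, read off from the description of $\psi_j$ the $q$-tuple of relevant variables $(i_{j,1}, \ldots, i_{j,q})$ and the local predicate $f_j \colon \Sigma^q \to \{0,1\}$ that defines $\psi_j$, query the assignment oracle at those positions, and output $f_j(\asgmt(i_{j,1}), \ldots, \asgmt(i_{j,q}))$. Then $\Pr[V^\asgmt = 1] = \frac{1}{m}\sum_{j \in [m]} \psi_j(\asgmt) = \val_\Psi(\asgmt)$, with randomness complexity $\bigO(\log n)$ and query complexity $q = \bigO(1)$ as required.

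Both translations are mechanical and clearly run in polynomial time, and the matching of acceptance probability with CSP-value is immediate from the uniform choice made on each side. The only mild bookkeeping point is that exact uniform sampling from $[m]$ via unbiased coin flips requires $m$ to be a power of two, which is handled by a preliminary padding of $\Psi$ to a size that is a power of two while preserving $\val$; this is standard and does not pose a genuine obstacle. Beyond that, the main thing worth verifying is simply that the verifier's circuit and the constraint descriptions can be produced in polynomial time, which is clear from the polynomial-time nature of $V$ and from the explicit descriptions of the constraints of $\Psi$.
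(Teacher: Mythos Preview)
Your proposal is correct and follows essentially the same approach as the paper: enumerate the $2^{r(|x|)}=\poly(|x|)$ random strings and turn each into a $q$-ary constraint for the $(\Rightarrow)$ direction, and sample a uniform constraint index for the $(\Leftarrow)$ direction. The paper actually omits the $(\Leftarrow)$ direction entirely (``can be shown similarly''), so your version is more complete, including the power-of-two padding remark.
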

\begin{proof}
Let $V$ be a verifier with
randomness complexity $r(n) = \bigO(n)$,
query complexity $q(n) = q \in \bbN$, and 
alphabet $\Sigma$.
Given an input $x \in \{0,1\}^*$,
we can assume the proof length for $V$ to be $\poly(|x|)$.
We construct a $q$-ary constraint system $\Psi$ 
over variable set $N \coloneq [\poly(|x|)]$ and alphabet $\Sigma$
as follows:
\begin{itemize}
\item for every possible sequence $R \in \{0,1\}^{r(|x|)}$ of $r(|x|)$ random bits,
we run $V(x)$ to generate
a query sequence $I_R = (i_1, \ldots, i_q)$ and 
a circuit $D_R \colon \Sigma^q \to \{0,1\}$ in polynomial time.
\item create a new constraint $\psi_j$ such that
\begin{align}
    \psi_j(\asgmt) \coloneq D_R(\asgmt(i_1), \ldots, \asgmt(i_q))
\end{align}
for every assignment $\asgmt \colon N \to \Sigma$.
\end{itemize}
Note that the construction of $\Psi$ completes in polynomial time;
in particular, the size of $\Psi$ is polynomial in $|x|$.
Observe that for any proof $\pi \in \Sigma^N$,
which can be thought of as an assignment to $\Psi$,
\begin{align}
    \val_\Psi(\pi) = \Pr_{\psi_j \sim \Psi}\Bigl[\psi_j(\pi) = 1\Bigr] 
    = \Pr_{(I,D) \sim V(S)}\Bigl[D(\pi|_I) = 1\Bigr] 
    = \Pr\Bigl[V^\pi(S) = 1\Bigr],
\end{align}
completing the proof of the first statement.
The second statement is omitted as can be shown similarly.
\end{proof}

\section{Probabilistic Checkable Reconfiguration Proofs}
\label{sec:PCRP}

In this section, we prove the main result of this paper, i.e.,
a PCRP verifier for a $\PSPACE$-complete reconfiguration problem.
For any pair of proofs $\pi^\sss, \pi^\ttt \in \Sigma^n$,
a \emph{reconfiguration sequence from $\pi^\sss$ to $\pi^\ttt$ over $\Sigma^n$}
is a sequence
$(\pi^{(1)}, \ldots, \pi^{(T)}) \in (\Sigma^n)^*$ such that
$\pi^{(1)} = \pi^\sss$,
$\pi^{(T)} = \pi^\ttt$, and 
$\pi^{(t)}$ and $\pi^{(t+1)}$ differ in at most one symbol
for every $t \in [T-1]$.

\begin{theorem}[Probabilistic Checkable Reconfiguration Proof (PCRP)]
\label{thm:reconfPCP}
A language $L$ is in $\PSPACE$ if and only if
there exists a verifier $V$
with randomness complexity $r(n)=\bigO(\log n)$ and
query complexity $q(n) = \bigO(1)$,
coupled with
polynomial-time computable proofs $\pi^\sss, \pi^\ttt \colon \allowbreak \{0,1\}^* \to \{0,1\}^*$
such that the following hold
for every input $x \in \{0,1\}^*$:
\begin{itemize}
    \item \textup{(Completeness)} If $x \in L$,
    there exists a reconfiguration sequence $\pi = ( \pi^{(1)}, \ldots \pi^{(T)} )$
    from $\pi^\sss(x)$ to $\pi^\ttt(x)$ over $\{0,1\}^{\poly(n)}$ such that
    $V(x)$ accepts every proof with probability $1$; namely,
    \begin{align}
        \forall t \in [T], \;\; \Pr\Bigl[ V^{\pi^{(t)}}(x) = 1 \Bigr] = 1.
    \end{align}
    \item \textup{(Soundness)} If $x \notin L$,
    every reconfiguration sequence $\pi = ( \pi^{(1)}, \ldots \pi^{(T)} )$
    from $\pi^\sss(x)$ to $\pi^\ttt(x)$ over $\{0,1\}^{\poly(n)}$
    includes a proof that is rejected by
    $V(x)$ with probability more than $\frac{1}{2}$; namely,
    \begin{align}
        \exists t \in [T], \;\; \Pr\Bigl[ V^{\pi^{(t)}}(x) = 1 \Bigr] < \frac{1}{2}.
    \end{align}
\end{itemize}
\end{theorem}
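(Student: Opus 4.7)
The plan is to prove the two directions separately. The easy direction, that any language admitting such a PCRP verifier lies in $\PSPACE$, follows by brute-force search: since each proof $\pi^{(t)}$ has $\poly(n)$ bits and the verifier $V$ uses only $\bigO(\log n)$ random bits, one can deterministically check in polynomial time whether $\Pr[V^{\pi}(x)=1]=1$ for a fixed $\pi$. In polynomial space we then enumerate all candidate proofs $\pi \in \{0,1\}^{\poly(n)}$ and run a reachability search in the implicit graph on $\{0,1\}^{\poly(n)}$ whose vertices are the accepted proofs and whose edges connect proofs differing in at most one bit; this can be done in $\PSPACE$ by Savitch's theorem.

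For the hard direction, the plan is to build a PCRP system for the $\PSPACE$-complete problem \prb{Succinct Graph Reachability}, following the overview. First, I would recast this problem as a bit-level reconfiguration problem: tokens are placed on edges $(x,y)\in E$, and reconfiguration proceeds by modifying one vertex endpoint at a time, simulated by a sequence of single-symbol edits over the alphabet $\{0,1,\bot\}$, where $\bot$ indicates an in-flight half-edge. A configuration $x \circ y$ is deemed valid if either $(x,y)\in E$ or at least one of $x,y$ consists of $\{0,1\}$-symbols only (so that one side is a legitimate vertex while the other may be partially overwritten with $\bot$). Then I would encode each half by a $(q,\kappa)$-locally testable code $\Enc$ from \cref{thm:LTC}, so configurations become strings $f \circ g \circ \pi \in \{0,1,\bot\}^{2\ell+p}$, where $\pi$ is a PCPP proof for the language $L_G=\{\Enc(x)\circ \Enc(y) \mid (x,y)\in E\}\in \NP$, which is succinctly checkable because $G$ is succinctly represented. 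The starting and ending proofs are obtained by setting $f=\Enc(v^\sss)$, $g=\Enc(v^\sss)$ (resp.\ $v^\ttt$) and attaching the canonical PCPP proof.

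The PCRP verifier $V$ then performs one of three randomly-chosen tests: (1) run the local tester for $\Enc$ on $f$ and on $g$ to catch strings far from every codeword; (2) sample a constant number of positions from $f$ and $g$ to detect whether one side contains many $\bot$ symbols, in which case we accept; (3) run the smooth PCPP verifier from \cref{thm:smooth-PCPP} on $(f\circ g,\pi)$ for $L_G$, modified so that whenever the PCPP would query a position containing $\bot$, the verifier accepts immediately. For soundness, a proof that is accepted with probability $1-o(1)$ must have both halves close to codewords $\Enc(x),\Enc(y)$ and contain only a tiny fraction of $\bot$'s on one side, and then by the PCPP soundness the encoded pair satisfies $(x,y)\in E$ (or the configuration is ``half-done'' between two edges sharing a vertex). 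Any valid reconfiguration sequence between PCRP proofs can therefore be decoded into a reconfiguration sequence for the underlying edge-token problem on $G$, and conversely a path in $G$ lifts to a PCRP sequence with bitwise edits; the correctness of these translations is where I expect most of the bookkeeping to live, particularly in keeping the PCPP proof consistent during the $\bot$-blanking phases.

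The main technical obstacle is to maintain \emph{perfect} completeness throughout the reconfiguration, despite the PCPP only being defined for fully-formed inputs; this is precisely what the $\bot$-absorbing modification of the PCPP verifier accomplishes, and one must check that the canonical PCPP proof can be continuously deformed together with $f$ and $g$ during the half-edge transitions without ever creating a rejecting query. Once perfect completeness is in place, the alphabet can be collapsed from $\{0,1,\bot\}$ to $\{0,1\}$ by the alphabet-reduction gadget of \citet{ohsaka2023gap}, which preserves randomness complexity $\bigO(\log n)$, query complexity $\bigO(1)$, and the constant soundness gap. Amplifying the resulting soundness to below $\tfrac{1}{2}$ by repeating the verifier's test a constant number of times then yields the parameters in the statement.
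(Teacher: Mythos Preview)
Your plan matches the paper's approach almost exactly: the easy direction via Savitch, the hard direction via a PCRP for \prb{Succinct Graph Reachability} built from a locally testable encoding of vertices, a smooth PCPP for the adjacency language, the $\bot$-based half-edge transitions, followed by Ohsaka's alphabet reduction and constant repetition. The decomposition into local-test / $\bot$-sampling / modified-PCPP is precisely what the paper does.

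One technical point deserves care. You write that the modified PCPP ``accepts immediately whenever it would query a position containing $\bot$.'' If this applies to positions of the \emph{proof} $\pi$ as well as to $f\circ g$, soundness breaks: a cheating prover can set $\pi=\bot^{p}$, take $f=\Enc(\alpha)$, $g=\Enc(\beta)$ with $(\alpha,\beta)\notin E$, pass the local tests and the $\bot$-sampling test on $f,g$, and then be accepted by the PCPP test with high probability (whenever any proof bit is queried). The paper's verifier handles $\bot$ asymmetrically: it accepts if $(f\circ g)|_I$ contains $\bot$, but \emph{rejects} if $\pi|_I$ contains $\bot$ while $(f\circ g)|_I$ does not. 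This asymmetry is what makes the soundness argument go through (one replaces $\bot$'s in $\pi$ by $0$ and invokes the PCPP soundness on the resulting binary proof). Your completeness discussion already anticipates that the canonical PCPP proof must be deformed carefully during the $\bot$-blanking phase; the point is that during that phase $g=\bot^{\ell}$, so the $\bot$-sampling test accepts outright and the PCPP is never run, which is why rejecting on $\bot$ in $\pi$ does not hurt completeness. With this correction your outline is the paper's proof.
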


\subsection{$\PSPACE$-completeness of \prb{Succinct Graph Reachability}}
We first introduce a canonical $\PSPACE$-complete problem called \prb{Succinct Graph Reachability},
for which we design a PCRP system.

\begin{problem}
\label{prb:ckt-reach}
For a polynomial-size circuit
    $S \colon \{0,1\}^n \to \{0,1\}^n$ promised that
    $S(1^n) = 1^n$,
    \prb{Succinct Graph Reachability} requests to decide
    if 
    there is a sequence of a finite number of assignments,
    $( \alpha^{(1)}, \ldots, \alpha^{(T)} )$,
    from $0^n$ to $1^n$ such that
    $\alpha^{(t)} = \alpha^{(t+1)}$,
    $S(\alpha^{(t)}) = \alpha^{(t+1)}$, or
    $S(\alpha^{(t+1)}) = \alpha^{(t)}$ for all $t \in [T-1]$.
\end{problem}

Similar variants were formulated previously, e.g., \cite{galperin1983succinct,papadimitriou1986note}.
$\PSPACE$-completeness of \prb{Succinct Graph Reachability} is shown below
for the sake of completeness.

\begin{proposition}
\label{prp:ckt-reach}
\prb{Succinct Graph Reachability} is $\PSPACE$-complete.
\end{proposition}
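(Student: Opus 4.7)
The plan is to recast \cref{prb:ckt-reach} as iterated reachability in the \emph{functional graph} of $S$, which cleanly yields both a polynomial-space upper bound and a reduction from deterministic polynomial-space computation.

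For membership in $\PSPACE$, I would first observe that a valid reconfiguration sequence from $0^n$ to $1^n$ is precisely an undirected walk (allowing stationary steps) in the graph $G_S$ on vertex set $\{0,1\}^n$ whose edges are $\{v, S(v)\}$ for $v \in \{0,1\}^n$. The key structural fact about such a functional graph is that every weakly connected component contains exactly one directed cycle, and every vertex in the component forward-iterates into that cycle, since forward orbits are eventually periodic and forward-closure preserves the undirected component. Because $S(1^n) = 1^n$ makes $\{1^n\}$ the unique (self-loop) cycle of its component, the component of $1^n$ coincides with its forward basin $\{v : S^k(v) = 1^n \text{ for some } k\}$. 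Hence the problem reduces to iterating $S$ from $0^n$ for up to $2^n$ steps and checking whether $1^n$ is ever reached, which uses polynomial space (storing only the current vertex and a binary counter of at most $n$ bits).

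For $\PSPACE$-hardness, I would reduce from the acceptance problem for deterministic polynomial-space Turing machines. Given $L \in \PSPACE$ and input $x$, let $M$ be a deterministic $p(|x|)$-space TM for $L$, modified in the standard way so that there is a unique accepting configuration $C^*$ that is a fixed point of the transition function $\delta_M$ (once $M$ accepts, it remains in $C^*$ forever). Encoding all configurations as $n$-bit strings with $n = \poly(|x|)$, the map $\delta_M$ is realized by a circuit of size $\poly(|x|)$. Let $C_0(x)$ denote the initial configuration, and pick an efficiently computable and invertible bijection $\sigma \colon \{0,1\}^n \to \{0,1\}^n$ sending $C_0(x) \mapsto 0^n$ and $C^* \mapsto 1^n$ (for instance, an XOR mask combined with a swap). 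Setting $S \coloneq \sigma \circ \delta_M \circ \sigma^{-1}$, the resulting circuit is of polynomial size and satisfies $S(1^n) = 1^n$. Applying the functional-graph characterization, the instance $S$ is a YES instance iff iterating $S$ from $0^n$ reaches $1^n$, iff $M$ accepts $x$, iff $x \in L$; this is computable in polynomial time, giving the desired reduction.

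The main subtlety is the functional-graph lemma itself: the reconfiguration problem permits both forward and backward transitions, which at first sight gives undirected reachability in $G_S$, but the fixed-point property $S(1^n) = 1^n$ is precisely what forces the weakly connected component of $1^n$ to collapse to its forward basin, reducing the richer condition back to ordinary iterated reachability. A secondary care-point in the hardness reduction is ensuring that ``junk'' configurations outside $M$'s legal reachable set do not end up in the same weakly connected component as $C^*$; this follows automatically once $C^*$ is arranged to be the unique fixed point on the accepting side of the standardized machine.
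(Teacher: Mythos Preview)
Your proposal is correct and follows essentially the same route as the paper: both reduce the bidirectional reconfiguration condition to forward iterability (the paper via a terse backward induction on the sequence, you via the structural fact that a functional graph has exactly one cycle per weak component, so the fixed point $1^n$'s component coincides with its forward basin), and then handle membership and hardness by the standard TM-simulation. Your direct-iteration argument for membership is in fact slightly cleaner than the paper's detour through $\NPSPACE$ and Savitch, and your conjugating bijection $\sigma$ plays exactly the role of the paper's two prefix bits in the hardness reduction.

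One small point: your closing worry about ``junk'' configurations is a non-issue, and the justification you give for dismissing it is not right. Junk configurations \emph{can} land in the weak component of $C^*$ (nothing stops $\delta_M$ from mapping garbage to $C^*$), and arranging $C^*$ to be ``the unique fixed point on the accepting side'' does not prevent this. But it does not matter: your own functional-graph lemma already guarantees that $C_0(x)$ lies in $C^*$'s weak component iff the forward $\delta_M$-orbit of $C_0(x)$ reaches $C^*$, which is exactly acceptance of $x$ --- irrespective of where junk configurations end up. So the concern is unfounded and the offered resolution spurious, though the argument itself is unaffected.
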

\begin{proof} 
For the sake of convenience,
we first show that
a circuit $S \colon \{0,1\}^n \to \{0,1\}^n$
is a YES instance of \prb{Succinct Graph Reachability} 
if and only if
\begin{align}
    \exists m \in \bbN \text{ such that } \underbrace{S \circ \cdots \circ S}_{m \text{ times}} (0^n) = 1^n.
\end{align}
The ``if'' direction is obvious because
whenever $\underbrace{S \circ \cdots \circ S}_{m \text{ times}}(0^n) = 1^n$,
the sequence $( \alpha^{(1)}, \ldots, \alpha^{(m+1)} )$
such that $\alpha^{(t)} = \underbrace{S \circ \cdots \circ S}_{t-1 \text{ times}}(0^n)$ for all $t \in [m+1]$
satisfies the desired property.
Suppose we have a sequence $( \alpha^{(1)}, \ldots, \alpha^{(T)} )$
from $0^n$ to $1^n$ such that
$\alpha^{(t)} = \alpha^{(t+1)}$,
$S(\alpha^{(t)}) = \alpha^{(t+1)}$, or
$S(\alpha^{(t+1)}) = \alpha^{(t)}$.
It is not hard to see that for each $t$,
$\underbrace{S \circ \cdots \circ S}_{m_t \text{ times}}(\alpha^{(t)}) = 1^n$ for some $m_t \in \bbN$;
in particular, this is the case for $\alpha^{(1)} = 0^n$, completing the ``only if'' direction.

We now show the $\PSPACE$-completeness of \prb{Succinct Graph Reachability}.
Membership in $\PSPACE$ follows from
the fact that $\prb{Succinct Graph Reachability} \in \NPSPACE$ and
{Savitch}'s theorem \cite{savitch1970relationships}.
Consider the following $\PSPACE$-complete problem:
Given a deterministic Turing machine $M$,
input $x \in \{0,1\}^*$,
and $1^n$,
does $M$ accept $x$ in space $n$?
Note that all possible configurations of $M$ having space $n$ on input $x$ are specified by
$\{0,1\}^{\alpha n}$ for some constant $\alpha$ depending on $M$.
Let $c_\mathrm{init} \in \{0,1\}^{\alpha n}$ denote
the initial configuration of $M$ on input $x$, and
$M(x,c) \in \{0,1\}^{\alpha n}$ denote
the next configuration of $M$ following $c \in \{0,1\}^{\alpha n}$.
Define now a circuit
$S \colon \{0,1\}^{2+\alpha n} \to \{0,1\}^{2+\alpha n}$
of polynomial size (in $|M|$, $|x|$, and $n$) as follows:
\begin{align}
\begin{aligned}
    S(00 \circ 0^{\alpha n}) & \coloneq 01 \circ c_\mathrm{init}, \\
    S(01 \circ c) & \coloneq
    \begin{cases}
        11 \circ 1^{\alpha n} & \text{if } c \text{ is any accepting configuration}, \\
        00 \circ 0^{\alpha n} & \text{if } c \text{ is any rejecting configuration}, \\
        01 \circ M(x,c) & \text{otherwise},
    \end{cases}
    \\
    S(11 \circ 1^{\alpha n}) & \coloneq 11 \circ 1^{\alpha n}.
\end{aligned}
\end{align}
Observe easily that
$\underbrace{S \circ \cdots \circ S}_{m \text{ times}}(0^{2+\alpha n}) = 1^{2+\alpha n}$ 
for some $m \in \bbN$
if and only if
$M$ accepts $x$ in space $n$, completing the proof.
\end{proof}

\subsection{PCRP System for \prb{Succinct Graph Reachability}}
\label{subsec:main:verifier}
Here, we will construct a PCRP system for \prb{Succinct Graph Reachability}.
We first encode the assignment to a circuit by an error-correcting code.
For a polynomial-size circuit $S \colon \{0,1\}^n \to \{0,1\}^n$,
let $\Enc \colon \{0,1\}^n \to \{0,1\}^{\ell(n)}$ be
an $(\bigO(1), \kappa)$-locally testable error-correcting code with relative distance $\rho$
such that $\rho \in (0,1)$, $\kappa \in \bbN$, and $\ell(n) = n^{1+o(1)}$ by \cref{thm:LTC}.\footnote{
Without loss of generality, we can assume \cref{thm:LTC} holds for given $n$
because we can find $n' = n^{1+o(1)}$ for which \cref{thm:LTC} holds and
construct a slightly larger circuit $S' \colon \{0,1\}^{n'} \to \{0,1\}^{n'}$ such that
$S'$ is a YES instance if and only if so is $S$.
}
Let $M$ be an $(\bigO(1), \kappa)$-local tester for $\Enc$.
Consider the following pair language $L_\ckt \subseteq \{0,1\}^* \times \{0,1\}^*$:
Given a polynomial-size circuit $S \colon \{0,1\}^n \to \{0,1\}^n$ and
two strings $f, g \in \{0,1\}^{\ell(n)}$,
we define $(S, f \circ g) \in L_\ckt$ if and only if
$f = \Enc(\alpha)$ and $g = \Enc(\beta)$ for
a pair $\alpha, \beta \in \{0,1\}^n$ such that
$\alpha=\beta$, $S(\alpha) = \beta$, or $S(\beta) = \alpha$.
Intuitively, $L_\ckt$ determines the adjacency relation of codewords of $\Enc$ with respect to $S$.
Observe easily that $L_\ckt$ is in $\NP$, and
by \cref{prp:ckt-reach},
$S$ is
a YES instance of \prb{Succinct Graph Reachability}
if and only if 
there exists a sequence of strings,
$( f^{(1)}, \ldots, f^{(T)} )$,
from $\Enc(0^n)$ to $\Enc(1^n)$ such that
$(S, f^{(t)} \circ f^{(t+1)}) \in L_\ckt$
for all $t \in [T-1]$.
We will use $L_\ckt(S)$ to denote
the set of all strings $f \circ g \in \{0,1\}^{2\ell(n)}$
such that $(S, f \circ g) \in L_\ckt$; namely,
\begin{align}
\begin{aligned}
    L_\ckt(S)
    & \coloneq \Bigl\{
        f \circ g \in \{0,1\}^{2\ell(n)} \Bigm|
        (S, f \circ g) \in L_\ckt
    \Bigr\} \\
    & = \Bigl\{
        \Enc(\alpha) \circ \Enc(\beta) \Bigm|
        \alpha = \beta \vee S(\alpha)=\beta \vee S(\beta)=\alpha
    \Bigr\}.
\end{aligned}
\end{align}
Let $V_\ckt$ denote a smooth PCPP verifier for $L_\ckt$,
having
randomness complexity $r(n) = \bigO(\log n)$,
query complexity $q(n) = q \in \bbN$,
proximity parameter $\delta_\ckt \coloneq \frac{\rho}{4} \in (0,1)$, and
soundness error $s_\ckt \coloneq 1 - \Omega(\delta_\ckt) \in (0,1)$ obtained by \cref{thm:smooth-PCPP}.
Note that the proof length can be bounded by some polynomial $p(n)$ in the input length $n$.
Hereafter, we use a new symbol ``$\bot$'' that is neither $0$ nor $1$.
We will write $f \circ g \circ \pi$ for
a string provided to $V_\ckt(S)$, where
$f \circ g \in \{0,1,\bot\}^{\ell(n)} \times \{0,1,\bot\}^{\ell(n)}$ and
$\pi \in \{0,1,\bot\}^{p(n)}$ is an alleged proof.

\subsubsection{Verifier Description}
\label{subsubsec:main:verifier:description}
Our verifier $V$ is
given a polynomial-size circuit $S \colon \{0,1\}^n \to \{0,1\}^n$ and
oracle access to
$f \circ g \circ \pi \in \{0,1,\bot\}^{2\ell(n)+p(n)}$, and
is designed as follows:
\begin{enumerate}
\item $V$ ensures that $f$ \emph{or} $g$ must be a codeword of $\Enc$ by
running the local tester $M$ on $f$ and $g$ separately.
Note that $M$ rejects whenever it reads $\bot$ at least once,
which still ensures that $\Pr[M^f \text{ rejects}] \geq \kappa \cdot \Delta(f, \Enc(\cdot))$.
\item
$V$ allows $f \circ g$ to contain $\bot$,
enabling $f$ or $g$ to transform between different codewords of $\Enc$.
Specifically,
$V$ accepts with probability equal to the fraction of $\bot$ in $f$ or $g$,
which can be done by testing whether
$f_i = \bot$ or $g_j = \bot$ for
independently and uniformly chosen $i, j \in [\ell(n)]$.
During $f = \bot^n$ or $g = \bot^n$,
the contents of $\pi$ can be modified arbitrarily without being rejected,
which is essential in the perfect completeness (\cref{lem:completeness}).

\item On the other hand, if neither $f$ nor $g$ contains ``many'' $\bot$'s,
$V$ expects $f \circ g$ to be close to $L_\ckt(S)$; thus,
it wants to execute the smooth PCPP verifier $V_\ckt(S)$,
whose behavior is, however, undefined if $f \circ g \circ \pi$ contains $\bot$.
Instead, we run a \emph{modified verifier} $V'_\ckt(S)$,
which accepts if and only if 
$(f \circ g)|_I$ contains $\bot$ or
($\pi|_I$ does not contain $\bot$ and $D((f \circ g \circ \pi)|_I) = 1$)
for $(I,D) \sim V_\ckt(S)$.\footnote{
In the latter case, we can safely assume that $(f \circ g)|_I$ does not contain $\bot$.
}
This test is crucial for proving the soundness (\cref{lem:soundness}).
\end{enumerate}

The precise pseudocode of $V(S)$ is presented below.

\begin{itembox}[l]{
    Verifier $V^{f \circ g \circ \pi}(S)$ using
    local tester $M$ for $\Enc$ and
    smooth PCPP verifier $V_\ckt$ for $L_\ckt$.
}
    \begin{algorithmic}[1]
        \item[\textbf{Input:}]
            a polynomial-size circuit $S \colon \{0,1\}^n \to \{0,1\}^n$.
        \item[\textbf{Oracle access:}]
            strings $f,g \in \{0,1,\bot\}^{\ell(n)}$ representing an implicit input and
            a proof $\pi \in \{0,1,\bot\}^{p(n)}$.
        \State run local tester $M$ on $f$ and $g$.     \Comment{$M$ declares \textsf{reject} if it reads $\bot$.}
        \If{both runs of $M$ declare \textsf{reject}}
            \State \textsf{reject}.
        \EndIf
        \State pick $i \sim [\ell(n)]$ and $j \sim [\ell(n)]$ independently and uniformly.
        \If{$f_i \neq \bot$ and $g_j \neq \bot$}
            \LComment{run a modified PCPP verifier $V'_\ckt(S)$.}
            \State execute PCPP verifier $V_\ckt(S)$ to generate
            a query sequence $I = (i_1, \ldots, i_q)$ and 
            a circuit $D \colon \{0,1\}^q \to \{0,1\}$.
            \If{$(f \circ g)|_I$ contains $\bot$}
                \State \textsf{accept}.
            \ElsIf{$\pi|_I$ does not contain $\bot$ \textbf{and} $D((f \circ g \circ \pi)|_I) = 1$}
                \State \textsf{accept}.
            \Else
                \State \textsf{reject}.
            \EndIf
        \Else
            \State \textsf{accept}.
        \EndIf
    \end{algorithmic}
\end{itembox}

For any two strings
$\alpha, \beta \in \{0,1\}^n$ such that $\Enc(\alpha) \circ \Enc(\beta) \in L_\ckt(S)$,
let $\Pi(\alpha,\beta) \in \{0,1\}^{p(n)}$
denote a polynomial-time computable proof that makes
$V_\ckt(S)$ to accept
$\Enc(\alpha) \circ \Enc(\beta) \circ \Pi(\alpha, \beta)$ with probability $1$.
Note that
$\Enc(0^n) \circ \Enc(0^n) \circ \Pi(0^n, 0^n)$ and
$\Enc(1^n) \circ \Enc(1^n) \circ \Pi(1^n, 1^n)$
are accepted by $V(S)$ with probability $1$.

\subsubsection{Completeness and Soundness}
We now prove the completeness and soundness.
Define
$\sigma^\sss \coloneq \Enc(0^n) \circ \Enc(0^n) \circ \Pi(0^n, 0^n) \in \{0,1\}^{2\ell(n)+p(n)}$ and
$\sigma^\ttt \coloneq \Enc(1^n) \circ \Enc(1^n) \circ \Pi(1^n, 1^n) \in \{0,1\}^{2\ell(n)+p(n)}$.
Let $\val_V(S)$ denote
the maximum possible value of
\begin{align}
    \min_{\sigma^{(t)} \in \sqsigma} \Pr\Bigl[V(S) \text{ accepts } \sigma^{(t)} \Bigr]
\end{align}
over all possible reconfiguration sequences
$\sqsigma = (\sigma^{(1)}, \ldots, \sigma^{(T)})$
from
$\sigma^\sss$ to $\sigma^\ttt$.
The perfect completeness ensures that $\val_V(S) = 1$ if $S$ is a YES instance, while
the soundness guarantees that $\val_V(S) < 1-\delta$ for some $\delta \in (0,1)$ if $S$ is a NO instance.

We first show the completeness.

\begin{lemma}
\label{lem:completeness}
Suppose a circuit $S \colon \{0,1\}^n \to \{0,1\}^n$ is a YES instance of \prb{Succinct Graph Reachability}.
Then, there exists a reconfiguration sequence $\sqsigma$ from
$\sigma^\sss$ to $\sigma^\ttt$ over $\{0,1,\bot\}^{2 \ell(n) + p(n)}$ such that
$V(S)$ accepts any proof in $\sqsigma$ with probability $1$.
\end{lemma}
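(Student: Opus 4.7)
The plan is to exploit the refinement used inside the proof of \cref{prp:ckt-reach}: if $S$ is a YES instance, then $\underbrace{S \circ \cdots \circ S}_{m \text{ times}}(0^n) = 1^n$ for some $m \in \bbN$. Setting $\alpha_t \coloneq \underbrace{S \circ \cdots \circ S}_{t \text{ times}}(0^n)$ for $0 \leq t \leq m$ yields a chain with $\alpha_0 = 0^n$, $\alpha_m = 1^n$, and $\alpha_{t+1} = S(\alpha_t)$, so that both the self-loop $\Enc(\alpha_t) \circ \Enc(\alpha_t)$ and the forward edge $\Enc(\alpha_t) \circ \Enc(\alpha_{t+1})$ lie in $L_\ckt(S)$. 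Consequently, the canonical PCPP proofs $\Pi(\alpha_t, \alpha_t)$ and $\Pi(\alpha_t, \alpha_{t+1})$ are all well-defined.

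I will thread the reconfiguration sequence through the anchor configurations
\[
\sigma_t^{\circ} \coloneq \Enc(\alpha_t) \circ \Enc(\alpha_t) \circ \Pi(\alpha_t, \alpha_t), \qquad \sigma_t^{\to} \coloneq \Enc(\alpha_t) \circ \Enc(\alpha_{t+1}) \circ \Pi(\alpha_t, \alpha_{t+1}),
\]
visited in the order $\sigma_0^{\circ}, \sigma_0^{\to}, \sigma_1^{\circ}, \sigma_1^{\to}, \ldots, \sigma_m^{\circ}$, which starts at $\sigma^\sss$ and terminates at $\sigma^\ttt$. Between two adjacent anchors exactly one of $f, g$ changes while the other remains a full codeword $\Enc(\alpha_t)$ (or $\Enc(\alpha_{t+1})$); the PCPP proof $\pi$ also has to change. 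I realize each such update in three stages, using $\bot$ as a waypoint: \emph{(a)} overwrite the mutable half bit-by-bit into $\bot^{\ell(n)}$; \emph{(b)} while the mutable half is identically $\bot^{\ell(n)}$, rewrite $\pi$ bit-by-bit into the new PCPP proof; \emph{(c)} overwrite the all-$\bot$ half bit-by-bit with the new codeword.

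Perfect completeness at every intermediate configuration $f \circ g \circ \pi$ will then follow from a stage-by-stage case analysis. Throughout the whole sequence at least one of $f, g$ is an intact codeword of $\Enc$, so the corresponding run of the local tester $M$ accepts with probability $1$ and the early rejection is avoided. In stage \emph{(b)} the mutable half equals $\bot^{\ell(n)}$, hence every sample hits a $\bot$ there and $V(S)$ accepts at the final \textbf{else} branch regardless of how $\pi$ evolves. In stages \emph{(a)} and \emph{(c)}, $\pi$ equals the genuine PCPP proof $\Pi(\alpha, \beta)$ for the codeword pair $(\Enc(\alpha), \Enc(\beta))$ that the non-$\bot$ positions of $f \circ g$ agree with, and the PCPP sub-test is reached only when the uniform samples $i, j$ satisfy $f_i \neq \bot$ and $g_j \neq \bot$. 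If the query set $I$ hits any $\bot$ in $f \circ g$, the modified verifier accepts via the $(f \circ g)|_I$-contains-$\bot$ clause; otherwise $(f \circ g \circ \pi)|_I$ coincides with an accepting transcript $(\Enc(\alpha) \circ \Enc(\beta) \circ \Pi(\alpha, \beta))|_I$ of the smooth PCPP verifier $V_\ckt(S)$, so $D$ outputs $1$ by perfect completeness of $V_\ckt$.

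The delicate point is to keep $\pi$ synchronized with the codewords: modifications to $\pi$ must be confined to stage \emph{(b)}, so that whenever the PCPP sub-test can fire in stages \emph{(a)} or \emph{(c)} it sees a bona fide PCPP proof consistent with the non-$\bot$ entries of $f \circ g$. With this three-stage template obeyed for each anchor-to-anchor transition, concatenating the pieces produces a reconfiguration sequence from $\sigma^\sss$ to $\sigma^\ttt$ over $\{0, 1, \bot\}^{2\ell(n) + p(n)}$ on which $V(S)$ accepts every proof with probability $1$, as required.
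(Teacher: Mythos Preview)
Your proposal is correct and follows essentially the same approach as the paper: both use the three-stage template of (a) erasing the mutable half to $\bot^{\ell(n)}$, (b) rewriting the PCPP proof while that half is all-$\bot$, and (c) writing in the new codeword, together with the same case analysis for why $V(S)$ accepts at every step. Your write-up is in fact slightly more explicit than the paper's, which only spells out the $\sigma_t^{\circ} \to \sigma_t^{\to}$ transition and leaves the symmetric $\sigma_t^{\to} \to \sigma_{t+1}^{\circ}$ transition implicit.
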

\begin{proof} 
It suffices to show that
for any $\alpha \neq \beta \in \{0,1\}^n$ such that
$\alpha = S(\beta)$ or $\beta = S(\alpha)$
(i.e., $\Enc(\alpha) \circ \Enc(\beta) \in L_\ckt(S)$),
there is a reconfiguration sequence $\sqsigma$ from
$\Enc(\alpha) \circ \Enc(\alpha) \circ \Pi(\alpha,\alpha)$
to $\Enc(\alpha) \circ \Enc(\beta) \circ \Pi(\alpha,\beta)$ such that
$V(S)$ accepts any proof in $\sqsigma$ with probability $1$.
Such a reconfiguration sequence is obtained by the following procedure:

\begin{itembox}[l]{Reconfiguration $\sqsigma$ from
$\Enc(\alpha) \circ \Enc(\alpha) \circ \Pi(\alpha,\alpha)$ to
$\Enc(\alpha) \circ \Enc(\beta) \circ \Pi(\alpha,\beta)$.}
\begin{algorithmic}[1]
\LComment{start from $\Enc(\alpha) \circ \Enc(\alpha) \circ \Pi(\alpha,\alpha)$.}
\State change the second string from $\Enc(\alpha)$ to $\bot^{\ell(n)}$ one by one.
\label{linum:complete:1}
\LComment{obtain $\Enc(\alpha) \circ \bot^{\ell(n)} \circ \Pi(\alpha,\alpha)$.}
\State change the proof from $\Pi(\alpha, \alpha)$ to $\Pi(\alpha,\beta)$ one by one.
\label{linum:complete:2}
\LComment{obtain $\Enc(\alpha) \circ \bot^{\ell(n)} \circ \Pi(\alpha,\beta)$.}
\State change the second string from $\bot^{\ell(n)}$ to $\Enc(\beta)$.
\label{linum:complete:3}
\LComment{end at $\Enc(\alpha) \circ \Enc(\beta) \circ \pi^{(\alpha,\beta)}$.}
\end{algorithmic}
\end{itembox}

By the following case analysis,  
$V(S)$ turns out to accept every intermediate proof $f \circ g \circ \pi$ with probability $1$, as desired.
\begin{itemize}
    \item (Line~\ref{linum:complete:1})
    $f \circ g \circ \pi$ is obtained from
    $\Enc(\alpha) \circ \Enc(\alpha) \circ \Pi(\alpha, \alpha)$ by replacing some symbols of the second $\Enc(\alpha)$ by $\bot$.
    Observe that the local tester $M$ always accepts $f = \Enc(\alpha)$.
    We show that $V'_\ckt(S)$ always accepts $f \circ g \circ \pi$.
    Let $(I,D) \sim V_\ckt(S)$.
    If $(f \circ g)|_I$ contains $\bot$, $V'_\ckt(S)$ accepts.
    Otherwise, since $\pi = \Pi(\alpha, \alpha)$ does not contain $\bot$,
    it holds that $(f \circ g \circ \pi)|_I = (\Enc(\alpha) \circ \Enc(\alpha) \circ \Pi(\alpha, \alpha))|_I$,
    implying $D((f \circ g \circ \pi)|_I) = 1$.

    \item (Line~\ref{linum:complete:2})
    $f \circ g \circ \pi$ has a form of
    $\Enc(\alpha) \circ \bot^{\ell(n)} \circ \pi$ for some $\pi \in \{0,1,\bot\}^{p(n)}$.
    The local tester $M$ always accepts $\Enc(\alpha)$, and
    $V(S)$ would not have run the modified verifier $V'_\ckt(S)$; i.e.,
    $V(S)$ always accepts $f \circ g \circ \pi$.

    \item (Line~\ref{linum:complete:3})
    $f \circ g \circ \pi$ is obtained from
    $\Enc(\alpha) \circ \Enc(\beta) \circ \Pi(\alpha, \beta)$
    by replacing some symbols of $\Enc(\beta)$ by $\bot$.
    Similarly to the first case, we can show that
    $V(S)$ always accepts $f \circ g \circ \pi$.
\end{itemize}

\end{proof}

We then show the soundness.

\begin{lemma}
\label{lem:soundness}
Suppose a circuit $S \colon \{0,1\}^n \to \{0,1\}^n$ is a NO instance of \prb{Succinct Graph Reachability}.
Then, for any reconfiguration sequence $\sqsigma$ from
$\sigma^\sss$ to $\sigma^\ttt$ 
over $\{0,1,\bot\}^{2 \ell(n) + p(n)}$,
$\sqsigma$ includes a proof that is rejected by $V(S)$ with probability at least
\begin{align}
    \min\left\{
        (\kappa \epsilon)^2, (1-\epsilon)^2 \cdot \frac{1-s_\ckt}{2}
    \right\},
    \text{ where }
    \epsilon \coloneq \min\left\{\frac{1-s_\ckt}{2q}, \frac{\rho}{3}\right\}.
\end{align}
\end{lemma}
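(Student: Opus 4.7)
The plan is to argue the contrapositive: if every proof $\sigma^{(t)} = f^{(t)} \circ g^{(t)} \circ \pi^{(t)}$ in $\sqsigma$ is rejected by $V(S)$ with probability strictly less than $\delta \coloneq \min\{(\kappa\epsilon)^2,\,(1-\epsilon)^2(1-s_\ckt)/2\}$, then $S$ is a YES instance of \prb{Succinct Graph Reachability}, contradicting the hypothesis. Concretely, from such a sequence I intend to extract a valid token-on-edge reconfiguration from $(0^n, 0^n)$ to $(1^n, 1^n)$ in the graph encoded by $S$.

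First I would pin down, for each $t$, the codewords of $\Enc$ nearest to $f^{(t)}$ and $g^{(t)}$. If both were $\epsilon$-far from $\Enc(\cdot)$, the two independent runs of the local tester $M$ would each reject with probability at least $\kappa\epsilon$, so the verifier rejects in line~3 with probability at least $(\kappa\epsilon)^2$, contradicting the assumption. Hence at every $t$ at least one of $f^{(t)}, g^{(t)}$ is $\epsilon$-close to a (unique, because $\epsilon < \rho/2$) codeword, defining partial pointers $\alpha^{(t)}, \beta^{(t)} \in \{0,1\}^n$; call $t$ \emph{safe} if both are defined. I would next show $(\alpha^{(t)}, \beta^{(t)}) \in L_\ckt$ at every safe $t$: otherwise $\Enc(\alpha^{(t)}) \circ \Enc(\beta^{(t)})$ is at relative distance exceeding $\rho/2$ from $L_\ckt(S)$, so by the choice of $\epsilon$ and $\delta_\ckt$, $f^{(t)} \circ g^{(t)}$ is $\delta_\ckt$-far from $L_\ckt(S)$ and $V_\ckt$ rejects with probability greater than $1 - s_\ckt$. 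Since every $\bot$ contributes to $\Delta(\cdot,\Enc(\cdot))$, both strings have $\bot$-fractions at most $\epsilon$, so smoothness of $V_\ckt$ plus a union bound yields $\Pr[(f^{(t)} \circ g^{(t)})|_I\text{ contains }\bot] \leq q\epsilon \leq (1-s_\ckt)/2$, and Bonferroni gives $\Pr[V'_\ckt\text{ rejects}] \geq (1-s_\ckt)/2$. Combined with the independent lower bound $\Pr[f_i \neq \bot \land g_j \neq \bot] \geq (1-\epsilon)^2$, the verifier $V(S)$ rejects with probability at least $(1-\epsilon)^2(1-s_\ckt)/2$, again contradicting the assumption.

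Finally I would extract the subsequence of safe steps: it starts at $(0^n, 0^n)$, ends at $(1^n, 1^n)$, and every pair lies in $L_\ckt$, so it suffices to show that consecutive pairs agree in at least one coordinate. Fix consecutive safe steps $t_i < t_{i+1}$: at every intermediate step exactly one of $f, g$ is $\epsilon$-far (both-far is forbidden by the first stage), and because each single-step change touches only one of $f, g, \pi$ and alters a relative distance by at most $1/\ell$, while $2\epsilon < \rho$ prevents the nearest codeword of a close string from jumping, the far role cannot swap from $f$ to $g$ within the unsafe interval without first passing through a safe step. Hence one side (say $f$) remains far throughout, the other side $g$ stays $\epsilon$-close to a single codeword $\Enc(\beta^{(t_i)}) = \Enc(\beta^{(t_{i+1})})$, and $\beta^{(t_i)} = \beta^{(t_{i+1})}$ as required. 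The hard part is precisely this last invariant: rigorously tracking the coupled evolution of $\Delta(f^{(t)}, \Enc(\cdot))$ and $\Delta(g^{(t)}, \Enc(\cdot))$ under single-bit changes, and combining uniqueness of nearest codewords with the no-both-far guarantee to conclude that only one of $\alpha, \beta$ changes across any unsafe interval. The resulting sequence of safe pointers is then a token-on-edge reconfiguration from $(0^n, 0^n)$ to $(1^n, 1^n)$ in $S$, contradicting its being a NO instance.
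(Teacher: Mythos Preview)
Your plan is correct and mirrors the paper's proof almost exactly; the paper argues directly rather than by contrapositive, but the three stages are the same: rule out the both-far case via the local tester, use PCPP soundness to force every ``safe'' step (both decodings defined) to lie in $L_\ckt(S)$, and extract from the decoded sequence a walk in the $S$-graph from $0^n$ to $1^n$. The one genuine packaging difference is the extraction step. You take the subsequence of safe indices and argue that the ``far role'' cannot swap inside an unsafe interval, so one coordinate is constant across consecutive safe pairs---giving a token-on-edge reconfiguration. The paper instead builds a single vertex sequence $\gamma$ by an explicit token-walk on the grid $\{\alpha,\beta\}\times[T]$ (their Claims establishing the properties (P1)--(P3) and the $\gamma$-construction), and then identifies the bad transition as some $(\alpha^{(t)},\beta^{(t)})\notin L_\ckt(S)$. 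The two are equivalent; your version is arguably cleaner, while the paper's gives the vertex walk directly without the line-graph detour.

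Two small points you should tighten. First, the distance estimate: from ``$\Enc(\alpha)\circ\Enc(\beta)$ is $>\rho/2$-far from $L_\ckt(S)$'' and $\Delta(f\circ g,\Enc(\alpha)\circ\Enc(\beta))\le\epsilon$, the triangle inequality only yields $>\rho/2-\epsilon\ge\rho/6$, which is \emph{below} $\delta_\ckt=\rho/4$. The paper instead applies the triangle inequality on the single coordinate that differs to get $\Delta(f,\Enc(\alpha^\star))>\rho-\epsilon$, hence $\Delta(f\circ g,L_\ckt(S))>(\rho-\epsilon)/2\ge\rho/3>\delta_\ckt$; you should do the same. Second, when you invoke ``$V_\ckt$ rejects with probability greater than $1-s_\ckt$'', the PCPP soundness is stated only for binary strings, but $f,g,\pi$ may contain $\bot$. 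The paper handles this by replacing every $\bot$ in $\pi$ by $0$ to obtain $\tilde\pi$ and observing that the event ``$V'_\ckt$ accepts without reading any $\bot$'' is contained in ``$V_\ckt$ accepts $f\circ g\circ\tilde\pi$''; you need the analogous step before the Bonferroni bound goes through.
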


By \cref{lem:completeness,lem:soundness},
we can complete the proof of \cref{thm:reconfPCP}.
\begin{proof}[Proof of \cref{thm:reconfPCP}]
We first prove the ``only if'' direction.
Since \prb{Succinct Graph Reachability} is $\PSPACE$-complete,
it is sufficient to create its verifier $V$ and
polynomial-time computable proofs $\pi^\sss$ and $\pi^\ttt$.
The verifier $V$ is described in \cref{subsubsec:main:verifier:description}.
For a polynomial-size circuit $S \colon \{0,1\}^n \to \{0,1\}$,
the number of queries that $V$ makes is bounded by
$2 \cdot (\text{\# queries of } M)
+ 2
+ (\text{\# queries of } V_\ckt)
= \bigO(1)$, and
the number of random bits that $V$ uses is bounded by
$2 \cdot (\text{random bits of } M)
+ 2 \cdot (\log \ell(n))
+ (\text{random bits of } V_\ckt)
= \bigO(\log n)$.
We define
$\pi^\sss \coloneq \Enc(0^n) \circ \Enc(0^n) \circ \Pi(0^n,0^n)$ and
$\pi^\ttt \coloneq \Enc(1^n) \circ \Enc(1^n) \circ \Pi(1^n,1^n)$,
which are polynomial-time computable.

We reduce the alphabet size of $V$ from three (i.e., $\{0,1,\bot\}$) to two.
Using \cref{prp:PCP-CSP}, we first convert $V(S)$ into
a constraint system $\Psi = (\psi_j)_{j \in [m]}$ over alphabet $\{0,1,\bot\}$
such that
$\Pr[V(S) \text{ accepts } \pi]$ is equal to $\val_\Psi(\pi)$
for any proof $\pi \in \{0,1,\bot\}^{\poly(n)}$.
By \cite{ohsaka2023gap},
we obtain
a constraint system $\Psi' = (\psi'_j)_{j \in [m']}$ over alphabet $\{0,1\}$ and 
its two satisfying assignments $A^\sss$ and $A^\ttt$ such that
$\val_\Psi(\pi^\sss \reco \pi^\ttt) = 1$ implies $\val_{\Psi'}(A^\sss \reco A^\ttt) = 1$, and
$\val_\Psi(\pi^\sss \reco \pi^\ttt) < 1-\epsilon$ implies
$\val_{\Psi'}(A^\sss \reco A^\ttt) < 1-\Omega(\epsilon)$.
Using \cref{prp:PCP-CSP} again,
we convert $\Psi'$ into a verifier $V'$
with randomness complexity $\bigO(\log n)$, query complexity $\bigO(1)$, and alphabet $\{0,1\}$ such that
$\Pr[V' \text{ accepts } \pi']$ is equal to $\val_{\Psi'}(\pi')$
for any proof $\pi' \in \{0,1\}^{\poly(n)}$.
Consequently,
if $S$ is a YES instance, by \cref{lem:completeness},
there exists a reconfiguration sequence $\sqasgmt$ from $\asgmt^\sss$ to $\asgmt^\ttt$ over $\{0,1\}^{\poly(n)}$
such that $V'$ accepts any proof in $\sqasgmt$ with probability $1$,
whereas if $S$ is a NO instance, by \cref{lem:soundness},
for any reconfiguration sequence $\sqasgmt$ from $\asgmt^\sss$ to $\asgmt^\ttt$ over $\{0,1\}^{\poly(n)}$,
$\sqasgmt$ includes a proof that is rejected by $V'$ with probability $\Omega(1)$,
which can be amplified to $\frac{1}{2}$ by a constant number of repetition, as desired.

We then prove the ``if'' direction.
Suppose a language $L$ admits
a verifier $V$ with
randomness complexity $r(n) = \bigO(\log n)$ and
query complexity $q(n) = \bigO(1)$, associated with
polynomial-time computable proofs $\pi^\sss$ and $\pi^\ttt$.
Consider then the following nondeterministic algorithm for finding a reconfiguration sequence from $\pi^\sss$ to $\pi^\ttt$.
\begin{itembox}[l]{Nondeterministic polynomial-space algorithm for finding a reconfiguration sequence.}
\begin{algorithmic}[1]
    \item[\textbf{Input:}] $x \in \{0,1\}^*$.
    \State compute proofs $\pi^\sss(x)$ and $\pi^\ttt(x)$ that are accepted by $V$ with probability $1$.
    \State let $\pi^{(0)} \coloneq \pi^\sss(x)$ \textbf{and} $t \leftarrow 0$.
    \Repeat
        \If{$\pi^{(t)} = \pi^\ttt(x)$}
            \State \textsf{accept}.
        \EndIf
        \State nondeterministically guess the next proof $\pi^{(t+1)} \in \{0,1\}^{\poly(|x|)}$.
        \State check if
            $\pi^{(t)}$ and $\pi^{(t+1)}$ differ in at most one bit, and
            $V$ accepts $\pi^{(t+1)}$ with probability $1$
            by enumerating all possible $r(|x|)$ random bits.
        \If{the above test passes}
            \State forget $\pi^{(t)}$ \textbf{and} let $t \leftarrow t+1$.
        \Else
            \State \textsf{reject}.
        \EndIf
    \Until{$t > 2^{\poly(|x|)}$}
    \State \textsf{reject}.
\end{algorithmic}
\end{itembox}
The above algorithm accepts $x$ if and only if $x \in L$.
Moreover, it requires polynomial space and terminates within a finite steps;
namely, $L \in \NPSPACE$.
By {Savitch}'s theorem \cite{savitch1970relationships}, $L \in \PSPACE$.
\end{proof}

The remainder of this section is devoted to the proof of \cref{lem:soundness}.

\paragraph{Proof of \cref{lem:soundness}.}
Suppose we are given a reconfiguration sequence
$\sqsigma = (\sigma^{(1)}, \ldots, \sigma^{(T)})
= ( f^{(1)} \circ g^{(1)} \circ \pi^{(1)}, \ldots, f^{(T)} \circ g^{(T)} \circ \pi^{(T)} )$
from $\sigma^\sss$ to $\sigma^\ttt$
such that $\val_V(S) = \min_{t \in [T]}\Pr[V(S) \text{ accepts } \sigma^{(t)}]$.
Define
\begin{align}
\label{eq:soundness:epsilon}
    \epsilon \coloneq \min\left\{\frac{1-s_\ckt}{2q}, \frac{\rho}{3}\right\},
\end{align}
where
$q$ is the query complexity of $V_\ckt$,
$s_\ckt$ is the soundness error of $V_\ckt$, and
$\rho$ is the relative distance of $\Enc$.
Observe that if \emph{both} $f^{(t)}$ and $g^{(t)}$ for some $t \in [T]$ are $\epsilon$-far from $\Enc(\cdot)$,
then
$M$ rejects each $f^{(t)}$ and $g^{(t)}$ with probability more than $\kappa \epsilon$; namely,
\begin{align}
\Pr\Bigl[V(S) \text{ rejects } \sigma^{(t)} \Bigr]
\geq \Pr\Bigl[M \text{ rejects } f\Bigr] \cdot \Pr\Bigl[M \text{ rejects } g\Bigr]
> (\kappa\epsilon)^2.
\end{align}
Hereafter, we assume that
$f^{(t)}$ or $g^{(t)}$ is $\epsilon$-close to $\Enc(\cdot)$ for every $t \in [T]$.

We then define
$\Dec \colon \{0,1\}^{\ell(n)} \to \{0,1\}^n \cup \{\ast\}$ as
\begin{align}
    \Dec(f) \coloneq
    \begin{dcases}
        \argmin_{\alpha \in \{0,1\}^n} \Delta(f, \Enc(\alpha))
        & \text{if } f \text{ is } \epsilon\text{-close to } \Enc(\cdot), \\
        \ast & \text{otherwise},
    \end{dcases}
\end{align}
where $\ast$ means ``undefined''.
Using $\Dec$,
we obtain a sequence from $(0^n, 0^n)$ to $(1^n, 1^n)$,
denoted $( (\alpha^{(1)}, \beta^{(1)}), \ldots, (\alpha^{(T)}, \beta^{(T)}) )$,
where
$\alpha^{(t)} \coloneq \Dec(f^{(t)})$ and
$\beta^{(t)} \coloneq \Dec(g^{(t)})$ for all $t \in [T]$.
By assumption,
$\alpha^{(t)}$ or $\beta^{(t)}$ must not be $\ast$ for all $t \in [T]$.
We claim the following (see also \cref{tab:gamma}):

\begin{table}[t]
    \newcommand{\pick}{\cellcolor{red!40}}
    \centering
    \small
    \tabcolsep = 2pt
    \begin{tabular}{|c||c|c|c|c|c|c|c|c|c|c|c|c|c|c|c|c|c|c|c|}
        \hline
        $t$ &
        $1$ & $\cdots$ & $t_1$ & $t_1+1$ & 
        $\cdots$ & $t_2-1$ & $t_2$ & $t_2+1$ &
        $\cdots$ & $t_3-1$ & $t_3$ & $t_3+1$ &
        $\cdots$ & $t_4-1$ & $t_4$ & $t_4+1$ &
        $\cdots$ & $T-1$ & $T$ \\
        \hline\hline
        $\alpha^{(t)}$ &
            \pick$0^n$ & \pick$\cdots$ & \pick$0^n$ & $\ast$ &
            $\cdots$ & $\ast$ & \pick$\alpha^{(t_2)}$ & \pick$\alpha^{(t_2)}$ &
            \pick$\cdots$ & \pick$\alpha^{(t_2)}$ & \pick$\alpha^{(t_2)}$ & $\ast$ &
            $\cdots$ & $\ast$ & \pick$1^n$ & \pick$1^n$ &
            \pick$\cdots$ & \pick$1^n$ & \pick$1^n$ \\
        \hline
        $\beta^{(t)}$ &
            $0^n$ & $\cdots$ & \pick$0^n$ & \pick$0^n$ &
            \pick$\cdots$ & \pick$0^n$ & \pick$0^n$ & $\ast$ &
            $\cdots$ & $\ast$ & \pick$\beta^{(t_3)}$ & \pick$\beta^{(t_3)}$ &
            \pick$\cdots$ & \pick$\beta^{(t_3)}$ & \pick$\beta^{(t_3)}$ & $\ast$ &
            $\cdots$ & $\ast$ & $1^n$ \\ 
        \hline
    \end{tabular}
    \caption{
    Illustration of \cref{clm:soundness:gamma},
    which finds a sequence $\sqgamma$ from $0^n$ to $1^n$ over $\{0,1\}^n$
    using $( (\alpha^{(1)}, \beta^{(1)}), \ldots, (\alpha^{(T)}, \beta^{(T)}) )$.
    Colored strings are included in $\sqgamma$, resulting in
    $\sqgamma = ( 0^n, \ldots, 0^n, \alpha^{(t_2)}, \ldots, \alpha^{(t_2)}, \beta^{(t_3)}, \ldots, \beta^{(t_3)}, 1^n, \ldots, 1^n )$.
    If an input circuit $S$ is a NO instance,
    at least one of
    $\Enc(\alpha^{(t_2)}) \circ \Enc(0^n)$,
    $\Enc(\alpha^{(t_2)}) \circ \Enc(\beta^{(t_3)})$, or
    $\Enc(\beta^{(t_3)}) \circ \Enc(1^n)$ is not in $L_\ckt(S)$.
    }
    \label{tab:gamma}
\end{table}

\begin{claim}
\label{clm:sound:alpha-beta}
The following hold:
\begin{enumerate}[label=\textup{(P\arabic*)}]
    \item $\alpha^{(t)} = \alpha^{(t+1)}$ or $\beta^{(t)} = \beta^{(t+1)}$ for each $t$.
    \label{clm:sound:alpha-beta:1}
    \item If $\alpha^{(t)} \neq \ast$ and $\alpha^{(t+1)} \neq \ast$, then $\alpha^{(t)} = \alpha^{(t+1)}$.
    \label{clm:sound:alpha-beta:2}
    \item If $\beta^{(t)} \neq \ast$ and $\beta^{(t+1)} \neq \ast$, then $\beta^{(t)} = \beta^{(t+1)}$.
    \label{clm:sound:alpha-beta:3}
\end{enumerate}    
\end{claim}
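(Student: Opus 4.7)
The plan is to exploit two elementary facts about the reconfiguration sequence $\sqsigma$: adjacent proofs differ in at most one position of the whole string $f\circ g\circ \pi$, and the code $\Enc$ has relative distance $\rho$ with decoding radius effectively $\frac{\rho}{2}$. Since $\epsilon\le \frac{\rho}{3}$ by \eqref{eq:soundness:epsilon}, the decoding map $\Dec$ is unambiguous on every string that is $\epsilon$-close to $\Enc(\cdot)$.

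For \ref{clm:sound:alpha-beta:1}, I would note that because $\sigma^{(t)}$ and $\sigma^{(t+1)}$ agree in all but at most one coordinate of $\{0,1,\bot\}^{2\ell(n)+p(n)}$, the single differing position lies either entirely inside the $f$-block, entirely inside the $g$-block, or inside the $\pi$-block (or nowhere). In the last two cases $f^{(t)}=f^{(t+1)}$ holds, which immediately gives $\alpha^{(t)}=\alpha^{(t+1)}$; in the first case $g^{(t)}=g^{(t+1)}$ and hence $\beta^{(t)}=\beta^{(t+1)}$. So in every case at least one of the two equalities claimed in \ref{clm:sound:alpha-beta:1} holds.

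For \ref{clm:sound:alpha-beta:2}, I would use a triangle inequality argument on relative distances restricted to the $f$-block. Since $\alpha^{(t)}\neq \ast$ and $\alpha^{(t+1)}\neq \ast$, we have $\Delta(f^{(t)},\Enc(\alpha^{(t)}))\le \epsilon$ and $\Delta(f^{(t+1)},\Enc(\alpha^{(t+1)}))\le \epsilon$ by definition of $\Dec$. The reconfiguration constraint gives $\Delta(f^{(t)},f^{(t+1)})\le \frac{1}{\ell(n)}$ (they differ on at most one of the $\ell(n)$ coordinates of the $f$-block). Therefore
\[
\Delta(\Enc(\alpha^{(t)}),\Enc(\alpha^{(t+1)}))
\le 2\epsilon+\tfrac{1}{\ell(n)}
\le \tfrac{2\rho}{3}+\tfrac{1}{\ell(n)}
<\rho
\]
for all sufficiently large $n$. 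Since $\Enc$ has relative distance strictly greater than $\rho$ between distinct codewords, the only possibility is $\Enc(\alpha^{(t)})=\Enc(\alpha^{(t+1)})$ and hence $\alpha^{(t)}=\alpha^{(t+1)}$. Property \ref{clm:sound:alpha-beta:3} follows by exactly the same argument applied to the $g$-block and to $\beta^{(t)},\beta^{(t+1)}$.

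There is no real obstacle; the only mild subtlety is keeping track of which block the single-symbol change in $\sigma^{(t)}\to\sigma^{(t+1)}$ lies in (needed for \ref{clm:sound:alpha-beta:1}) and checking that $\frac{1}{\ell(n)}$ is dominated by the slack $\rho-2\epsilon\ge \frac{\rho}{3}$ in the triangle inequality (needed for \ref{clm:sound:alpha-beta:2} and \ref{clm:sound:alpha-beta:3}), which holds for all sufficiently large $n$.
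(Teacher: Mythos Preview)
Your proposal is correct and essentially identical to the paper's argument. The paper proves \ref{clm:sound:alpha-beta:1} by the contrapositive (if both $\alpha$ and $\beta$ change then $f\circ g$ must change in two places) and \ref{clm:sound:alpha-beta:2}--\ref{clm:sound:alpha-beta:3} by the same triangle-inequality calculation you give, rearranged as $\Delta(f^{(t)},f^{(t+1)})\ge \rho-2\epsilon$ to derive a contradiction; you argue each item directly rather than by contradiction, but the content is the same.
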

\begin{proof} 
Suppose first $\alpha^{(t)} \neq \alpha^{(t+1)}$ and $\beta^{(t)} \neq \beta^{(t+1)}$
for some $t$.
Then, $f^{(t)} \circ g^{(t)}$ and $f^{(t+1)} \circ g^{(t+1)}$
differ in at least two symbols,
contradicting that $\sqsigma$ is a reconfiguration sequence; thus, \ref{clm:sound:alpha-beta:1} must hold.

Suppose then
$\alpha^{(t)} \neq \ast$,
$\alpha^{(t+1)} \neq \ast$, and
$\alpha^{(t)} \neq \alpha^{(t+1)}$.
Since $f^{(t)}$ and $f^{(t+1)}$ are assumed to be $\epsilon$-close to $\Enc(\cdot)$,
by triangle inequality,
we have
\begin{align}
\begin{aligned}
    \Delta(f^{(t)}, f^{(t+1)})
    & \geq \underbrace{\Delta(\Enc(\alpha^{(t)}), \Enc(\alpha^{(t+1)}))}_{\geq \rho}
        - \underbrace{\Delta(f^{(t)}, \Enc(\alpha^{(t)}))}_{\leq \epsilon}
        - \underbrace{\Delta(f^{(t+1)}, \Enc(\alpha^{(t+1)}))}_{\leq \epsilon} \\
    & \geq \rho - 2\epsilon,
\end{aligned}
\end{align}
implying that $f^{(t)}$ and $f^{(t+1)}$
differ  in $(\rho-2\epsilon) \cdot \ell(n) \geq 2$ bits (for sufficiently large $n$),
contradicting that $\sqsigma$ is a reconfiguration sequence; thus, \ref{clm:sound:alpha-beta:2} holds.
Similarly, \ref{clm:sound:alpha-beta:3} can be shown.
\end{proof}

Now, we can find a \emph{valid} sequence from $0^n$ to $1^n$ over $\{0,1\}^n$,
denoted $\sqgamma = ( \gamma^{(1)}, \ldots, \gamma^{(T')} )$,
along a ``path'' over a grid $\{\alpha, \beta\} \times [T]$
by the following procedure (see also \cref{tab:gamma}):
\begin{itembox}[l]{Sequence $\sqgamma$ from $0^n$ to $1^n$.}
\begin{algorithmic}
    \State let $\alpha^{(T+1)} \coloneq 1^n$ \textbf{and} $\beta^{(T+1)} \coloneq 1^n$ for convenience.
    \State let $t' \leftarrow 0$
        \textbf{and}
        place a token at $(\alpha, 1)$.
    \Repeat
        \State let $t' \leftarrow t'+1$.
        \If{token is at $(\alpha, t)$}
            \State let $\gamma^{(t')} \coloneq \alpha^{(t)}$.
            \If{$\alpha^{(t+1)} \neq \ast$}
                token goes to $(\alpha, t+1)$.
            \Else
                \hspace{24mm} token goes to $(\beta, t)$.
            \EndIf
        \EndIf
        \If{token is at $(\beta, t)$}
            \State let $\gamma^{(t')} \coloneq \beta^{(t)}$.
            \If{$\beta^{(t+1)} \neq \ast$}
                token goes to $(\beta, t+1)$.
            \Else
                \hspace{24mm} token goes to $(\alpha, t)$.
            \EndIf
        \EndIf
    \Until{token is at $(\alpha, T+1)$ or $(\beta, T+1)$}
    \State \textbf{return} $\sqgamma \coloneq (\gamma^{(1)}, \ldots, \gamma^{(t')})$.
\end{algorithmic}
\end{itembox}

The correctness of the above procedure is shown in the following claim.
\begin{claim}
\label{clm:soundness:gamma}
The above procedure terminates and returns a sequence
$\sqgamma = ( \gamma^{(1)}, \ldots, \gamma^{(T')} )$
from $0^n$ to $1^n$ consisting only of strings of $\{0,1\}^n$.
Moreover,
each pair $(\gamma^{(t')}, \gamma^{(t'+1)})$ is equal to either
$(\alpha^{(t)}, \alpha^{(t+1)})$,
$(\beta^{(t)}, \beta^{(t+1)})$,
$(\alpha^{(t)}, \beta^{(t)})$, or
$(\beta^{(t)}, \alpha^{(t)})$ for some $t$.
\end{claim}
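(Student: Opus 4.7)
The plan is to prove the claim in three stages: (i) establish an invariant that the token is always at a position whose decoded value is defined; (ii) deduce termination from the invariant; (iii) read off the boundary conditions and the form of adjacent pairs directly from the pseudocode.

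For step (i), I would maintain by induction on iterations the invariant that whenever the token sits at $(\alpha, t)$ we have $\alpha^{(t)} \neq \ast$, and whenever it sits at $(\beta, t)$ we have $\beta^{(t)} \neq \ast$. The base case is immediate since the token starts at $(\alpha, 1)$ with $\alpha^{(1)} = \Dec(\Enc(0^n)) = 0^n$. For the inductive step at $(\alpha, t)$, the forward transition to $(\alpha, t+1)$ preserves the invariant trivially. The row-switch to $(\beta, t)$ occurs only when $\alpha^{(t+1)} = \ast$; then the standing hypothesis of the lemma (that at every step at least one of $f^{(t)}, g^{(t)}$ is $\epsilon$-close to $\Enc(\cdot)$) forces $\beta^{(t+1)} \neq \ast$, and property \textup{(P1)} of \cref{clm:sound:alpha-beta} forces $\beta^{(t)} = \beta^{(t+1)} \neq \ast$ since $\alpha^{(t)} \neq \alpha^{(t+1)}$. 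The case at $(\beta, t)$ is symmetric.

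For step (ii), the invariant immediately rules out two consecutive row-switches: after switching from $(\alpha, t)$ to $(\beta, t)$, the invariant at the destination gives $\beta^{(t+1)} \neq \ast$, so the next iteration must take the forward branch to $(\beta, t+1)$. Hence the time coordinate $t$ strictly increases at least once every two iterations, and the loop must terminate in at most $2T$ steps. The termination condition at $(\alpha, T+1)$ or $(\beta, T+1)$ is actually reached because the convention $\alpha^{(T+1)} = \beta^{(T+1)} = 1^n$ guarantees that the forward branch fires at $t = T$.

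For step (iii), the first emitted symbol is $\gamma^{(1)} = \alpha^{(1)} = 0^n$, and the last emitted symbol -- produced in the iteration immediately before the loop exits -- is $\alpha^{(T)}$ or $\beta^{(T)}$, both of which equal $1^n$ since $\sigma^{(T)} = \sigma^\ttt$. The invariant guarantees $\gamma^{(t')} \in \{0,1\}^n$ throughout. Finally, the four control-flow branches of the pseudocode (forward in either row, switch in either direction) generate exactly the four allowed shapes of adjacent pairs listed in the claim. The crux of the argument is the invariant in step (i), which is the only place where both \textup{(P1)} and the standing $\epsilon$-closeness assumption are simultaneously invoked; once the invariant is in hand, termination and the form of $\sqgamma$ are a mechanical reading of the procedure.
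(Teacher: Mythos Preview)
Your proof is correct and follows essentially the same approach as the paper: both arguments use \textup{(P1)} together with the standing assumption that at least one of $\alpha^{(t)}, \beta^{(t)}$ is defined to show the token never lands on an $\ast$ and cannot switch rows twice in succession, which yields termination and the stated form of adjacent pairs. Your presentation is more explicit in packaging this as an inductive invariant, whereas the paper states the same implication (``if $\alpha^{(t)} \neq \ast$ then $\alpha^{(t+1)} \neq \ast$ or $\beta^{(t)} \neq \ast$'') directly; one tiny wording slip in your step~(ii) --- it is not the invariant at the destination but rather the fact $\beta^{(t+1)} \neq \ast$ established \emph{during} the step~(i) switch analysis that forces the forward move --- does not affect soundness.
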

\begin{proof} 
\ref{clm:sound:alpha-beta:1} of \cref{clm:sound:alpha-beta} ensures that
\begin{itemize}
    \item if $\alpha^{(t)} \neq \ast$, then $\alpha^{(t+1)} \neq \ast$ or $\beta^{(t)} \neq \ast$;\footnote{
    Because otherwise, we have
    $\alpha^{(t)} \neq \ast$ and $\alpha^{(t+1)} = \beta^{(t)} = \ast$,
    which implies $\beta^{(t+1)} = \ast$ by \ref{clm:sound:alpha-beta:1},
    a contradiction that $\alpha^{(t+1)}$ or $\beta^{(t+1)}$ is not $\ast$.
    }
    \item if $\beta^{(t)} \neq \ast$, then $\beta^{(t+1)} \neq \ast$ or $\alpha^{(t)} \neq \ast$.
\end{itemize}
Thus, by construction, $\sqgamma$ does not include any $\ast$.

Suppose the token is currently placed at $(\alpha, t)$ and just before at $(\beta, t)$.
Then, the token must be placed at $(\alpha, t+1)$ in the next step.
Similarly, if the token is placed at $(\beta, t)$ and just before at $(\alpha, t)$; then,
it must be at $(\beta, t+1)$ in the next step,
which ensures that we eventually reach $(\alpha, T+1)$ or $(\beta, T+1)$ to terminate.
The latter statement is obvious from the construction.
\end{proof}

Since we have been given a NO instance,
$\sqgamma$ must include
$(\gamma^{(t')}, \gamma^{(t'+1)})$ for some $t' \in [T']$ such that
$\gamma^{(t')} \neq \gamma^{(t'+1)}$,
$S(\gamma^{(t')}) \neq \gamma^{(t'+1)}$, and
$S(\gamma^{(t'+1)}) \neq \gamma^{(t')}$.
By \ref{clm:sound:alpha-beta:2} and \ref{clm:sound:alpha-beta:3} 
and \cref{clm:soundness:gamma},
either of
$(\alpha^{(t)}, \beta^{(t)}) = (\gamma^{(t')}, \gamma^{(t'+1)})$ or
$(\beta^{(t)}, \alpha^{(t)}) = (\gamma^{(t')}, \gamma^{(t'+1)})$ must hold for some $t \in [T]$,
implying that
$\alpha^{(t)} \neq \beta^{(t)}$, 
$S(\alpha^{(t)}) \neq \beta^{(t)}$, and
$S(\beta^{(t)}) \neq \alpha^{(t)}$;
namely,
$\Enc(\alpha^{(t)}) \circ \Enc(\beta^{(t)}) \notin L_\ckt(S)$.

We now estimate the probability that $V(S)$ rejects
$\sigma^{(t)}$.
Recall that $f^{(t)}$ and $g^{(t)}$ are $\epsilon$-close to $\Enc(\cdot)$, implying that
\begin{align}
    \Pr_{\substack{i,j \sim [n]}}\Bigl[
        f^{(t)}_i \neq \bot \text{ and } g^{(t)}_j \neq \bot
    \Bigr] \geq (1-\epsilon)^2.
\end{align}
So, we would run the modified PCPP verifier $V'_\ckt$ with probability $\geq (1-\epsilon)^2$.
We use the following claim to bound the rejection probability of $V'_\ckt(S)$:

\begin{claim}
\label{clm:soundness:PCPP}
Suppose $\alpha = \Dec(f) \in \{0,1\}^n$, $\beta = \Dec(g) \in \{0,1\}^n$, and
$\Enc(\alpha) \circ \Enc(\beta) \notin L_\ckt(S)$ for
$f \circ g \in \{0,1,\bot\}^{2\ell(n)}$.
Then, for every proof $\pi \in \{0,1,\bot\}^{p(n)}$,
the modified PCPP verifier $V'_\ckt(S)$ rejects $f \circ g \circ \pi$ with probability more than
$1-s_\ckt - \epsilon q$.
\end{claim}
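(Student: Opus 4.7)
The plan is to reduce the acceptance probability of $V'_\ckt(S)$ on $f \circ g \circ \pi$ to the PCPP soundness of $V_\ckt(S)$ on the idealized input $\Enc(\alpha) \circ \Enc(\beta) \circ \pi'$, where $\pi' \in \{0,1\}^{p(n)}$ is obtained from $\pi$ by replacing every occurrence of $\bot$ by $0$. Since $\alpha = \Dec(f)$ and $\beta = \Dec(g)$, we have $\Delta(f,\Enc(\alpha)) \leq \epsilon$ and $\Delta(g,\Enc(\beta)) \leq \epsilon$, so $f \circ g$ and $\Enc(\alpha) \circ \Enc(\beta)$ disagree on a set $B$ of at most $2\epsilon \ell(n)$ positions; crucially, every position of $f \circ g$ carrying $\bot$ lies in $B$.

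First, I would verify that $\Enc(\alpha) \circ \Enc(\beta)$ is $\delta_\ckt$-far from $L_\ckt(S)$. By hypothesis, every element of $L_\ckt(S)$ has the form $\Enc(\alpha') \circ \Enc(\beta')$ with $\alpha' \neq \alpha$ or $\beta' \neq \beta$; by the relative distance $\rho$ of $\Enc$, at least one of the two halves of the concatenation is more than $\rho$-far from the corresponding half of $\Enc(\alpha)\circ\Enc(\beta)$, giving overall relative distance $> \rho/2 > \rho/4 = \delta_\ckt$. The PCPP soundness of $V_\ckt$ then guarantees that, for every proof $\pi' \in \{0,1\}^{p(n)}$,
\[
    \Pr\Bigl[V_\ckt(S) \text{ accepts } \Enc(\alpha) \circ \Enc(\beta) \circ \pi'\Bigr] < s_\ckt.
\]

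Second, I would exploit the smoothness of $V_\ckt$ to bound the probability that any query lands in $B$. By smoothness, each of the $L = 2\ell(n) + p(n)$ positions is queried with the same probability $q/L$, and since $L \geq 2\ell(n)$, a union bound over $B$ yields
\[
    \Pr_{(I,D) \sim V_\ckt(S)}\Bigl[I \cap B \neq \emptyset\Bigr] \leq |B| \cdot \frac{q}{L} \leq 2 \epsilon \ell(n) \cdot \frac{q}{2 \ell(n)} = \epsilon q.
\]

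Finally, I would combine the two estimates. Conditioned on $I \cap B = \emptyset$, we have $(f \circ g)|_I = (\Enc(\alpha) \circ \Enc(\beta))|_I$, which contains no $\bot$, so Case A of $V'_\ckt(S)$ cannot fire. If Case B fires, then $\pi|_I$ contains no $\bot$ either, hence $\pi|_I = \pi'|_I$, and therefore $D((f \circ g \circ \pi)|_I) = D((\Enc(\alpha) \circ \Enc(\beta) \circ \pi')|_I) = 1$, meaning that $V_\ckt(S)$ also accepts the idealized input on the same random coins. Thus the acceptance event of $V'_\ckt(S)$ is contained in $\{I \cap B \neq \emptyset\} \cup \{V_\ckt(S) \text{ accepts } \Enc(\alpha) \circ \Enc(\beta) \circ \pi'\}$, whose total probability is less than $s_\ckt + \epsilon q$, giving the claimed rejection probability. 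The main subtlety is the substitution step in Case B, which relies on the smoothness bound to guarantee that every perturbed position of $f \circ g$ (in particular every $\bot$) is avoided by the queries; once this is in place, the distance calculation for $L_\ckt(S)$ and the union bound are routine.
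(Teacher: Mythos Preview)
Your argument is correct and structurally the same as the paper's: a smoothness-based union bound of weight $\epsilon q$, a distance lower bound against $L_\ckt(S)$, an application of the PCPP soundness with $\bot$'s in the proof replaced by $0$, and a final union bound. The one difference is the string to which you apply the PCPP soundness. The paper shows that $f\circ g$ itself is $\delta_\ckt$-far from $L_\ckt(S)$ (via $\Delta(f,\Enc(\alpha^\star))\geq \rho-\epsilon$) and invokes soundness on $f\circ g\circ\tilde\pi$, while you pass all the way to the ideal input $\Enc(\alpha)\circ\Enc(\beta)\circ\pi'$ and enlarge the ``bad'' event from ``$(f\circ g)|_I$ contains $\bot$'' to ``$I$ hits a position where $f\circ g$ differs from $\Enc(\alpha)\circ\Enc(\beta)$''; the size bound $2\epsilon\ell(n)$ is the same either way. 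Your route has the minor advantage that the PCPP verifier is only ever run on a genuine $\{0,1\}$ input, sidestepping any question of how $D$ behaves when $(f\circ g)|_I$ still contains $\bot$. One cosmetic point: by smoothness each position is queried with probability at most $q/L$ (not exactly $q/L$), but this is the direction you need.
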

\begin{proof} 
We first show that
$(f \circ g)|_I$ contains $\bot$ for $(I,D) \sim V_\ckt(S)$ with probability at most $\epsilon q$.
Denote by $I_\ckt$ the indices of $f \circ g \circ \pi$, where $|I_\ckt| = 2\ell(n) + p(n)$.
By smoothness of $V_\ckt$, we have
$p_\ckt \coloneq \Pr_{(I,D)}[i \in I]$ for all $i \in I_\ckt$.
Since $|I| = q$ for any $(I,D) \sim V_\ckt(S)$, we obtain
\begin{align}
    |I_\ckt| \cdot p_\ckt
    = \sum_{i \in I_\ckt} \Pr_{(I,D)}\Bigl[i \in I\Bigr]
    \leq q
    \implies p_\ckt \leq \frac{q}{2\ell(n) + p(n)}.
\end{align}
Using a union bound and
the assumption that each $f$ and $g$ contains $\bot$ in at most $\epsilon \cdot \ell(n)$ positions,
we derive
\begin{align}
\begin{aligned}
    \Pr_{(I,D)}\Bigl[(f \circ g)_{I} \text{ contains } \bot\Bigr]
    & = \Pr_{(I,D)}\Bigl[\exists i \in I \text{ s.t. } (f \circ g)_i = \bot\Bigr] \\
    & \leq \sum_{i : (f\circ g)_i = \bot}
        \Pr_{(I,D)}\Bigl[i \in I\Bigr] \\
    & \leq \epsilon \cdot 2 \ell(n) \cdot p_\ckt \leq \epsilon q.
\end{aligned}
\end{align}

Subsequently, we show that $f \circ g$ is
$\delta_\ckt$-far from $L_\ckt(S)$, where
$\delta_\ckt = \frac{\rho}{4}$.
Letting $(\alpha^\star, \beta^\star) \in \{0,1\}^n \times \{0,1\}^n$ such that
$\Enc(\alpha^\star) \circ \Enc(\beta^\star) \in L_\ckt(S)$,
we have $\alpha \neq \alpha^\star$ or $\beta \neq \beta^\star$.
Suppose $\alpha \neq \alpha^\star$; then, we have
\begin{align}
\Delta(f, \Enc(\alpha^\star))
\geq \Delta(\Enc(\alpha^\star), \Enc(\alpha)) - \Delta(\Enc(\alpha), f)
\geq \rho - \epsilon.
\end{align}
Similarly, $\Delta(g, \Enc(\beta^\star)) \geq \rho-\epsilon$ if $\beta \neq \beta^\star$.
Consequently, we obtain
\begin{align}
    \Delta(f \circ g, \Enc(\alpha^\star) \circ \Enc(\beta^\star))
    \geq \frac{\rho-\epsilon}{2}
    > \frac{\rho}{4} = \delta_\ckt,
\end{align}
where we used the fact that $\epsilon \leq \frac{\rho}{3}$.

Taking a union bound, we derive
\begin{align}
\begin{aligned}
    & \Pr\Bigl[\text{modified verifier } V'_\ckt(S) \text{ accepts } f \circ g \circ \pi\Bigr] \\
    & = \Pr_{(I,D)}\Bigl[(f \circ g)|_I \text{ contains } \bot \text{ or }
        \Bigl( \pi|_I \text{ doesn't contain } \bot \text{ and } D((f\circ g \circ \pi)|_I) = 1 \Bigr)\Bigr] \\
    & = \Pr_{(I,D)}\Bigl[(f \circ g)|_I \text{ contains } \bot \text{ or }
        \Bigl( (f \circ g \circ \pi)|_I \text{ doesn't contain } \bot \text{ and } D((f\circ g \circ \pi)|_I) = 1 \Bigr)\Bigr] \\
    & \leq \underbrace{\Pr_{(I,D)}\Bigl[(f \circ g)|_I \text{ contains } \bot\Bigr]}_{\leq \epsilon q}
    + \Pr_{(I,D)}\Bigl[(f \circ g \circ \pi)|_I \text{ doesn't contain } \bot \text{ and } D((f\circ g \circ \pi)|_I) = 1\Bigr]. \\
\end{aligned}
\end{align}
Let $\tilde{\pi}$ be a proof obtained from $\pi$ by replacing every occurrence of $\bot$ by $0$.
If $(f \circ g \circ \pi)|_I$ does not contain $\bot$ and $D((f \circ g \circ \pi)|_I) = 1$, then 
$D((f \circ g \circ \tilde{\pi})|_I) = 1$.
Since
$f \circ g$ is $\delta_\ckt$-far from $L_\ckt(S)$,
we have
\begin{align}
\begin{aligned}
    & \Pr_{(I,D)}\Bigl[(f \circ g \circ \pi)|_I \text{ doesn't contain } \bot \text{ and } D((f\circ g \circ \pi)|_I) = 1\Bigr] \\
    & \leq \Pr_{(I,D)}\Bigl[ D((f\circ g \circ \tilde{\pi})|_I) = 1 \Bigr] \\
    & = \Pr\Bigl[ V_\ckt(S) \text{ accepts } f\circ g \circ \tilde{\pi} \Bigr] \\
    & < s_\ckt.
\end{aligned}
\end{align}
Accordingly, we get
\begin{align}
\begin{aligned}
    \Pr\Bigl[V'_\ckt(S) \text{ rejects } f \circ g \circ \pi\Bigr] 
    = 1 - \Pr\Bigl[V'_\ckt(S) \text{ accepts } f \circ g \circ \pi\Bigr] 
    > 1 - s_\ckt - \epsilon q,
\end{aligned}
\end{align}
completing the proof.
\end{proof}

Using \cref{clm:soundness:PCPP} and the definition of $\epsilon$ in \cref{eq:soundness:epsilon},
we derive
\begin{align}
\begin{aligned}
    \Pr\Bigl[V(S) \text{ rejects } \sigma^{(t)} \Bigr]
    & \geq \Pr_{i,j \sim [\ell(n)]}\Bigl[f^{(t)}_i \neq \bot \text{ and } g^{(t)}_j \neq \bot\Bigr]
    \cdot \Pr\Bigl[V'_\ckt(S) \text{ rejects } \sigma^{(t)} \Bigr] \\
    & > (1-\epsilon)^2 \cdot (1 - s_\ckt - \epsilon q) \\
    & \underbrace{\geq}_{\epsilon \leq \frac{1-s_\ckt}{2q}} (1-\epsilon)^2 \cdot \frac{1-s_\ckt}{2},
\end{aligned}
\end{align}
Consequently, we get
\begin{align}
    \max_{t \in [T]} \Pr\Bigl[V(S) \text{ rejects } \sigma^{(t)} \Bigr]
    > \min\left\{
        (\kappa \epsilon)^2, (1-\epsilon)^2 \cdot \frac{1-s_\ckt}{2}
    \right\},
\end{align}
accomplishing the proof of \cref{lem:soundness}. \qed

\subsection{Impossibility of Extension to Average Case}
Since the verifier of \cref{thm:reconfPCP}
co-nondeterministically guesses $t \in [T]$ and probabilistically checks $\pi^{(t)}$,
one might think of extending it so as to choose $t \in [T]$ randomly.
The soundness case then requires that
the verifier accepts ``most'' of but rejects a constant fraction of the proofs in any reconfiguration sequence.
The resulting reconfiguration proof $(\pi^{(1)}, \ldots, \pi^{(T)})$
can be thought of as a (kind of) \emph{rectangular PCPs} \cite{bhangale2020rigid},
whose column is of exponential length and row is of polynomial length, and
we pick $t \in [T]$ and run a verifier on $\pi^{(t)}$ to decide whether to accept.
However, such relaxation is impossible, as formally stated below.

\begin{observation}
\label{obs:impossible}
Let $V$ be a verifier with randomness complexity $r(n) = \bigO(\log n)$ and
query complexity $q(n) = \bigO(1)$,
$x \in \{0,1\}^n$ be an input of length $n$, and
$\pi^\sss(x), \pi^\ttt(x) \in \{0,1\}^{\poly(n)}$
be a pair of proofs that are accepted by $V$ with probability $1$.
Then, there always exists a reconfiguration sequence
$\sqpi = ( \pi^{(1)}, \ldots, \pi^{(T)} )$
from $\pi^\sss(x)$ to $\pi^\ttt(x)$ over $\{0,1\}^{\poly(n)}$ such that
\begin{align}
    \Pr_{t \sim [T]}\Bigl[V \text{ accepts } \pi^{(t)}\Bigr]
    = \Pr_{\substack{t \sim [T] \\ (I,D) \sim V(x)}}\Bigl[D(\pi^{(t)}|_I) = 1 \Bigr]
    > 1-\frac{1}{2^{\Omega(n)}},
\end{align}
where the probability is over the $r(n)$ random bits of $V$ and $t \sim [T]$.
\end{observation}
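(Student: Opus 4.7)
The plan is to exploit a simple padding argument: since a reconfiguration sequence is allowed to have consecutive identical proofs (zero bits differ), and is allowed to be of exponential length, we can dilute any intermediate ``bad'' proofs by an overwhelming majority of copies of $\pi^\sss(x)$, which is accepted by $V$ with probability $1$. Concretely, I would construct $\sqpi$ by first prepending a very large number $M$ of copies of $\pi^\sss(x)$, and then following with a short ``direct'' reconfiguration sequence from $\pi^\sss(x)$ to $\pi^\ttt(x)$ obtained by flipping one differing bit at a time.

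First I would let $d := |\{i : \pi^\sss(x)_i \neq \pi^\ttt(x)_i\}| \le \poly(n)$ be the Hamming distance between the two endpoint proofs, and construct a straight-line sequence $(\rho^{(1)}, \ldots, \rho^{(d+1)})$ from $\pi^\sss(x)$ to $\pi^\ttt(x)$ that flips exactly one differing bit at each step. Then I would set $M := 2^n$ and define
\[
   \sqpi \;:=\; (\, \underbrace{\pi^\sss(x), \ldots, \pi^\sss(x)}_{M \text{ copies}},\, \rho^{(2)}, \rho^{(3)}, \ldots, \rho^{(d+1)} \,),
\]
which has length $T = M + d$. This is a valid reconfiguration sequence because consecutive identical proofs trivially differ in at most one bit, and the direct segment $(\rho^{(1)},\ldots,\rho^{(d+1)})$ flips one bit per step.

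Next I would bound the average acceptance probability. Writing $p_t := \Pr[V^{\pi^{(t)}}(x) = 1]$, the first $M$ positions satisfy $p_t = 1$ because $\pi^\sss(x)$ is accepted with probability $1$, and the remaining $d$ positions satisfy $p_t \ge 0$ trivially. Hence
\[
   \Pr_{t \sim [T]}\Bigl[V \text{ accepts } \pi^{(t)}\Bigr]
   \;=\; \frac{1}{T}\sum_{t=1}^{T} p_t
   \;\ge\; \frac{M}{M + d}
   \;\ge\; 1 - \frac{d}{2^n}
   \;=\; 1 - \frac{1}{2^{\Omega(n)}},
\]
using $d \le \poly(n)$. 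The resulting sequence has length $2^n + \poly(n) = 2^{\poly(n)}$, which is within the length budget of the PCRP framework.

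I do not anticipate any substantive obstacle: the argument is a pure padding/averaging trick, and the only slight subtlety is confirming that the reconfiguration-sequence definition allows adjacent identical proofs (it does, since ``differ in at most one bit'' permits differing in zero bits). The content of the observation is precisely to highlight that replacing the co-nondeterministic quantification over $t$ in \cref{thm:reconfPCP} by a random one would trivialize the notion, so any PCRP-style characterization of $\PSPACE$ in this direction must use the worst-case (co-nondeterministic) quantifier over $t$.
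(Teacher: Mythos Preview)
Your proposal is correct and takes essentially the same padding approach as the paper: the paper appends $2^n - (\poly(n)+1)$ copies of $\pi^\ttt(x)$ after a short direct reconfiguration, whereas you prepend $2^n$ copies of $\pi^\sss(x)$ before it, but the averaging computation and the resulting $1 - \poly(n)/2^n = 1 - 2^{-\Omega(n)}$ bound are identical.
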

\begin{proof} 
Consider any reconfiguration sequence
$( \pi^{(1)}, \ldots \pi^{(\poly(n)+1)} )$ from $\pi^\sss(x)$ to $\pi^\ttt(x)$.
By appending $( \underbrace{\pi^\ttt(x), \ldots, \pi^\ttt(x)}_{2^n - (\poly(n)+1) \text{ times}} )$ to it,
we obtain an exponential-length reconfiguration sequence $\sqpi = ( \pi^{(1)}, \ldots, \pi^{(2^n)} )$.
Since $\sqpi$ contains (at least) $2^n - (\poly(n)+1)$ number of $\pi^\ttt(x)$, we have
\begin{align}
    \Pr_{t \sim [T]}\Bigl[V \text{ accepts } \pi^{(t)}\Bigr]
    \geq \frac{2^n - (\poly(n)+1)}{2^n} = 1 - \frac{1}{2^{\Omega(n)}},
\end{align}
as desired.
\end{proof}

\section{$\PSPACE$-hardness of Approximation for Reconfiguration Problems}
\label{sec:inapprox}

In this section, we show that many popular reconfiguration problems are 
$\PSPACE$-hard to approximate,
answering an open problem of \cite{ito2011complexity}.
Since \citet{ohsaka2023gap} has already shown gap-preserving reductions starting from
the Reconfiguration Inapproximability Hypothesis (RIH),
which asserts that a gap version of \prb{Maxmin CSP Reconfiguration} is $\PSPACE$-hard,
we prove that RIH is true.

\subsection{Constant-factor Inapproximability of \prb{Maxmin CSP Reconfiguration}}
\label{sec:inapprox:csp}
We first define reconfiguration problems on constraint satisfaction.
For a $q$-ary constraint system $\Psi = (\psi_j)_{j \in [m]}$ over variable set $N$ and alphabet $\Sigma$ and
its two satisfying assignments $\asgmt^\sss$ and $\asgmt^\ttt$ for $\Psi$,
a \emph{reconfiguration sequence from $\asgmt^\sss$ to $\asgmt^\ttt$ over $\Sigma^N$}
is a sequence $( \asgmt^{(1)}, \ldots, \asgmt^{(T)} ) \in (\Sigma^N)^*$
such that
$\asgmt^{(1)} = \asgmt^\sss$,
$\asgmt^{(T)} = \asgmt^\ttt$, and
$\asgmt^{(t)}$ and $\asgmt^{(t+1)}$ differ in at most one vertex for every $t \in [T-1]$.
In the \prb{$q$-CSP Reconfiguration} problem,
we are asked to decide if there is a reconfiguration sequence of
satisfying assignments to $\Psi$ from $\asgmt^\sss$ to $\asgmt^\ttt$.
Subsequently, we 
formulate an approximate variant of \prb{$q$-CSP Reconfiguration} \cite{ito2011complexity,ohsaka2023gap}.
For a reconfiguration sequence
$\sqasgmt = ( \asgmt^{(1)}, \ldots, \asgmt^{(T)} )$ of assignments,
let $\val_\Psi(\sqasgmt)$ denote the \emph{minimum fraction} of satisfied constraints
over all $\asgmt^{(i)}$'s in $\sqasgmt$; namely,
\begin{align}
    \val_\Psi(\sqasgmt) \coloneq \min_{\asgmt^{(i)} \in \sqasgmt} \val_\Psi(\asgmt^{(i)}).
\end{align}
In \prb{Maxmin $q$-CSP Reconfiguration},
we wish to maximize $\val_\Psi(\sqasgmt)$
subject to $\sqasgmt = ( \asgmt^\sss, \ldots, \asgmt^\ttt )$.
For two assignments $\asgmt^\sss, \asgmt^\ttt \colon N \to \Sigma$,
let $\val_\Psi(\asgmt^\sss \reco \asgmt^\ttt)$ denote the maximum value of $\val_\Psi(\sqasgmt)$
over all possible reconfiguration sequences $\sqasgmt$ from $\asgmt^\sss$ to $\asgmt^\ttt$; namely,
\begin{align}
\val_\Psi(\asgmt^\sss \reco \asgmt^\ttt)
\coloneq \max_{\sqasgmt = ( \asgmt^\sss, \ldots, \asgmt^\ttt )} \val_\Psi(\sqasgmt)
= \max_{\sqasgmt = ( \asgmt^\sss, \ldots, \asgmt^\ttt )} \min_{\asgmt^{(i)} \in \sqasgmt} \val_\Psi(\asgmt^{(i)}).
\end{align}
For every numbers $0 \leq s \leq c \leq 1$,
\prb{Gap$_{c,s}$ $q$-CSP Reconfiguration} requests to determine
for a $q$-ary constraint system $\Psi$ and
its two assignments $\asgmt^\sss$ and $\asgmt^\ttt$,
whether
$\val_\Psi(\asgmt^\sss \reco \asgmt^\ttt) \geq c$ or
$\val_\Psi(\asgmt^\sss \reco \asgmt^\ttt) < s$.
As a corollary of \cref{thm:reconfPCP,prp:PCP-CSP}, we immediately obtain a proof of RIH, as formally stated below.

\begin{theorem}
\label{thm:RIH}
The Reconfiguration Inapproximability Hypothesis holds; that is,
there exist a universal constant $q \in \bbN$ such that
\prb{Gap$_{1,\frac{1}{2}}$ $q$-CSP Reconfiguration} with alphabet size $2$ is $\PSPACE$-complete.
\end{theorem}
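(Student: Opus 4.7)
The plan is to derive the theorem by directly combining the PCRP characterization of \cref{thm:reconfPCP} with the PCP-to-CSP translation of \cref{prp:PCP-CSP}. Fix any $\PSPACE$-complete language $L$. By \cref{thm:reconfPCP}, there exist a verifier $V$ with randomness complexity $\bigO(\log n)$, query complexity $q = \bigO(1)$, and polynomial-time computable proofs $\pi^\sss, \pi^\ttt \colon \{0,1\}^* \to \{0,1\}^{\poly(n)}$ realizing perfect completeness and soundness error $\frac{1}{2}$ on inputs from $L$. This $V$ will be the object we feed into the CSP conversion.

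Given an input $x \in \{0,1\}^*$, apply the first direction of \cref{prp:PCP-CSP} to $V(x)$ to obtain in polynomial time a $q$-ary constraint system $\Psi_x = (\psi_j)_{j \in [m]}$ over $\poly(|x|)$ Boolean variables such that
\[
\val_{\Psi_x}(\pi) = \Pr\!\Bigl[V^\pi(x) = 1\Bigr] \quad \text{for every } \pi \in \{0,1\}^{\poly(|x|)}.
\]
Designate $\asgmt^\sss \coloneq \pi^\sss(x)$ and $\asgmt^\ttt \coloneq \pi^\ttt(x)$ as the source and target assignments for the \prb{$q$-CSP Reconfiguration} instance $(\Psi_x, \asgmt^\sss, \asgmt^\ttt)$. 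Because $V$ accepts both $\pi^\sss(x)$ and $\pi^\ttt(x)$ with probability $1$, the above identity yields $\val_{\Psi_x}(\asgmt^\sss) = \val_{\Psi_x}(\asgmt^\ttt) = 1$, so both are genuine satisfying assignments, as required by the problem definition.

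To prove the reduction is gap-preserving, I would transport the two cases of \cref{thm:reconfPCP} through the identity $\val_{\Psi_x}(\pi) = \Pr[V^\pi(x) = 1]$. In the completeness case $x \in L$, the PCRP guarantees a reconfiguration sequence $(\pi^{(1)}, \ldots, \pi^{(T)})$ from $\pi^\sss(x)$ to $\pi^\ttt(x)$ each of whose entries is accepted by $V$ with probability $1$; viewing it as a reconfiguration sequence $\sqasgmt$ of assignments to $\Psi_x$ gives $\val_{\Psi_x}(\asgmt^{(t)}) = 1$ for every $t$, hence $\val_{\Psi_x}(\asgmt^\sss \reco \asgmt^\ttt) \geq 1$. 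In the soundness case $x \notin L$, every reconfiguration sequence over $\{0,1\}^{\poly(|x|)}$ contains some $\pi^{(t)}$ with $\Pr[V^{\pi^{(t)}}(x)=1] < \frac{1}{2}$, and the identity then forces $\val_{\Psi_x}(\asgmt^{(t)}) < \frac{1}{2}$, so $\val_{\Psi_x}(\asgmt^\sss \reco \asgmt^\ttt) < \frac{1}{2}$. These two cases together show $\PSPACE$-hardness of \prb{Gap$_{1,\frac{1}{2}}$ $q$-CSP Reconfiguration}; the converse containment in $\PSPACE$ is standard and follows by a nondeterministic polynomial-space search like the one at the end of the proof of \cref{thm:reconfPCP}.

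I do not anticipate a real obstacle here, since both black-box ingredients are already in place. The only points that require a little care are ensuring that the correspondence $\val_{\Psi_x}(\pi) = \Pr[V^\pi(x) = 1]$ is applied with the \emph{binary} alphabet (which is precisely the alphabet of the PCRP verifier guaranteed in the ``only if'' direction of \cref{thm:reconfPCP}), and verifying that the notion of single-bit edits on proofs in \cref{thm:reconfPCP} coincides with the single-variable change allowed between adjacent assignments in \prb{$q$-CSP Reconfiguration}; both are immediate from the definitions.
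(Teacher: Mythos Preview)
Your proposal is correct and matches the paper's approach exactly: the paper simply states that \cref{thm:RIH} is an immediate corollary of \cref{thm:reconfPCP} and \cref{prp:PCP-CSP}, and you have spelled out precisely that derivation. The only minor remark is that the abstract statement of \cref{thm:reconfPCP} does not literally assert that $V$ accepts $\pi^\sss(x)$ and $\pi^\ttt(x)$ with probability~$1$ in the soundness case, but this is immaterial since the definition of \prb{Gap$_{c,s}$ $q$-CSP Reconfiguration} in the paper only requires ``two assignments'' rather than satisfying ones.
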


By \cref{thm:RIH} and \cite{ohsaka2023gap,ohsaka2024gap},
a host of reconfiguration problems turn out to be $\PSPACE$-hard to approximate within a constant factor,
as listed in \cref{cor:RIH}.

\subsection{Polynomial-factor Inapproximability of \prb{Maxmin Clique Reconfiguration}}

We amplify inapproximability of \prb{Maxmin Clique Reconfiguration}
from a constant factor to a polynomial factor.
The proof of the following result uses the \emph{derandomized graph product} due to
\citet{alon1995derandomized};
see also \cite[Section~3.3.2]{hoory2006expander} and \cite[Example~22.7]{arora2009computational}.

\begin{theorem}
\label{thm:clique}
There exists a constant $\epsilon \in (0,1)$ such that
\prb{Maxmin Clique Reconfiguration} is $\PSPACE$-hard to approximate
within a factor of $n^{\epsilon}$,
where $n$ is the number of vertices.
\end{theorem}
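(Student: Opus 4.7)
The plan is to amplify the constant-factor hardness of \prb{Maxmin Clique Reconfiguration} from \cref{cor:RIH} by means of the derandomized graph product of \citet{alon1995derandomized}. Concretely, I start from a hard instance $(G, C^\sss, C^\ttt)$ with $|V(G)|=n_0$, $|C^\sss|=|C^\ttt|=k=\Theta(n_0)$, for which it is $\PSPACE$-hard to distinguish $\val_G(C^\sss \reco C^\ttt) \geq k$ from $\val_G(C^\sss \reco C^\ttt) < (1-\epsilon_0)k$. Fix a $d$-regular $\lambda$-expander $H$ on vertex set $V(G)$ with $d = O(1)$ and $\lambda/d$ an arbitrarily small constant, set $\ell = \Theta(\log n_0)$, and define the product graph $G'$ to have the length-$\ell$ walks of $H$ as vertices, with two walks $u, v$ adjacent iff $V(u) \cup V(v)$ is a clique of $G$. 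Let $W^\sss$ (resp.\ $W^\ttt$) be the set of length-$\ell$ walks of $H$ lying inside $C^\sss$ (resp.\ $C^\ttt$); both are cliques of $G'$ because every subset of a $G$-clique is a $G$-clique.

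For completeness I lift an optimal reconfiguration $(C_1, \ldots, C_T)$ in $G$ by implementing every elementary move $C_i \to C_{i+1}$ as two monotone walk phases in $G'$: first delete, one walk at a time, every walk currently present that uses a vertex of $C_i \setminus C_{i+1}$ (after which all remaining walks lie in $C_i \cap C_{i+1}$, still a clique of $G$), then insert, one walk at a time, every length-$\ell$ walk of $H$ lying in $C_{i+1}$ that uses a vertex of $C_{i+1} \setminus C_i$. The smallest $G'$-clique encountered along this lift has the form $|\calW(S)|$ for some clique $S$ of $G$ with $|S| \geq k-1$, where $\calW(S)$ denotes the set of length-$\ell$ walks of $H$ contained in $S$. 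By the Ajtai--Koml\'os--Szemer\'edi expander walk-counting lemma, $|\calW(S)| = \Omega((k/n_0)^\ell N)$ with $N := n_0 d^{\ell-1} = |V(G')|$, provided $\lambda/d$ is sufficiently small compared to $k/n_0$.

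For soundness I project any reconfiguration $(W_1, \ldots, W_T)$ in $G'$ from $W^\sss$ to $W^\ttt$ back to $G$ via $S_i := \bigcup_{w \in W_i} V(w)$. Each $S_i$ is a clique of $G$ of size at most $k$ (after arranging $\omega(G) = k$, which is essentially automatic from the FGL-style structure behind \cref{cor:RIH}), with $S_1 = C^\sss$ and $S_T = C^\ttt$. Because consecutive $W_i, W_{i+1}$ differ by exactly one walk, $|S_i \triangle S_{i+1}| \leq \ell$, so I refine the projected sequence into a valid $G$-reconfiguration by inserting the monotone transition $S_i \to (S_i \cap S_{i+1}) \to S_{i+1}$ between consecutive $S_i$'s, all of whose intermediate sets are subsets of some $S_j \cup S_{j+1}$ and are therefore cliques. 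The original $(1-\epsilon_0)$-gap forces some clique in this refinement to have size at most $(1-\epsilon_0)k$, which in turn forces some $|S_i| \leq (1-\epsilon_0)k + \ell \leq (1-\epsilon_0/2)k$ once I ensure $\ell \leq \epsilon_0 k / 2$. The expander walk-counting lemma then bounds $|W_i| \leq |\calW(S_i)| \leq O\bigl(((1-\epsilon_0/2)k/n_0 + \lambda/d)^\ell \cdot N\bigr)$.

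The ratio between the completeness lower bound $\Omega((k/n_0)^\ell N)$ and the soundness upper bound $O(((1-\epsilon_0/2)k/n_0 + \lambda/d)^\ell N)$ is at least $(1 + \Omega(\epsilon_0))^\ell = n_0^{\Omega(\epsilon_0)} = N^{\Omega(1)}$ for $\ell = \Theta(\log n_0)$ and constant $d$, supplying the promised $n^\epsilon$ inapproximability factor. The main obstacle is the soundness refinement: the $\ell$-vertex slack introduced when turning $(W_1, \ldots, W_T)$ into a valid $G$-reconfiguration must not erode the original $(1-\epsilon_0)$-gap, which forces $\ell \ll \epsilon_0 k$; this is comfortably accommodated by taking $k = \Theta(n_0)$ in the starting instance, together with standard near-Ramanujan expander constructions.
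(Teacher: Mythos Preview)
Your proposal is correct and follows the same derandomized-graph-product strategy as the paper. The one place where you work harder than necessary is the soundness refinement: because consecutive $W_i,W_{i+1}$ differ by a single walk, the projections satisfy $S_i\subseteq S_{i+1}$ or $S_{i+1}\subseteq S_i$, so $S_i\cap S_{i+1}$ equals one of the two and your $\ell$-vertex slack is actually zero. The paper exploits this nesting directly and never needs the side condition $\ell\le \epsilon_0 k/2$; your argument is still valid, just slightly less tight. (Incidentally, the nesting is also what makes your claim that intermediate sets are cliques go through cleanly: $S_j\cup S_{j+1}$ equals the larger of the two projections, which is a $G$-clique since the larger of $W_j,W_{j+1}$ is a $G'$-clique.)
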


We here formulate \prb{Clique Reconfiguration} and its approximate variant.
Denote by $\omega(G)$ the clique number of a graph $G$.
For a pair of cliques $C^\sss$ and $C^\ttt$ of a graph $G$,
a \emph{reconfiguration sequence from $C^\sss$ to $C^\ttt$}
is a sequence $(C^{(1)}, \ldots, C^{(T)})$ of cliques of $G$ such that
$C^{(1)} = C^\sss$, 
$C^{(T)} = C^\ttt$, and
$C^{(t)}$ and $C^{(t+1)}$ differ in at most one vertex; i.e.,
$|C^{(t)} \Delta C^{(t+1)}| = 1$.\footnote{
Such a model of reconfiguration is called \emph{token addition and removal} \cite{ito2011complexity}.
}
\prb{Clique Reconfiguration} asks if
there is a reconfiguration sequence 
from $C^\sss$ to $C^\ttt$
made up of cliques only of size at least
$\min\{ |C^\sss|, |C^\ttt| \} - 1$.
For a reconfiguration sequence of cliques of $G$,
denoted $\scrC = ( C^{(1)}, \ldots, C^{(T)} )$,
let
\begin{align}
    \val_G(\scrC) \coloneq \min_{C^{(i)} \in \scrC} |C^{(i)}|.
\end{align}
Then, for a pair of cliques $C^\sss$ and $C^\ttt$ of $G$,
\prb{Maxmin Clique Reconfiguration} requires to maximize 
$\val_G(\scrC)$ subject to
$\scrC = ( C^\sss, \ldots, C^\ttt )$.
Subsequently,
let $\val_G(C^\sss \reco C^\ttt)$ denote
the maximum value of $\val_G(\scrC)$
over all possible reconfiguration sequences $\scrC$
from $C^\sss$ to $C^\ttt$; namely,
\begin{align}
\val_G(C^\sss \reco C^\ttt) \coloneq \max_{\scrC = ( C^\sss, \ldots, C^\ttt )} \val_G(\scrC).
\end{align}

\paragraph{Reduction.}
We first describe a gap-amplification reduction from
\prb{Maxmin Clique Reconfiguration} to itself
using the derandomized graph product \cite{alon1995derandomized}.
Let $(G, C^\sss, C^\ttt)$ be an instance of 
\prb{Maxmin Clique Reconfiguration},
where $G=(V,E)$ is a graph on $n$ vertices.
By \cref{thm:RIH} and \cite{ohsaka2023gap},
it is $\PSPACE$-hard to distinguish whether
$\val_G(C^\sss \reco C^\ttt) \geq \omega(G)-1$ or
$\val_G(C^\sss \reco C^\ttt) \geq (1-\epsilon)(\omega(G)-1)$
for some constant $\epsilon \in (0,1)$
even when
$|C^\sss| = |C^\ttt| = \omega(G)$ and
$\frac{\omega(G)}{n}\in \left[\frac{1}{3}, \frac{1}{2}\right]$.

Construct then a new instance $(H,D^\sss,D^\ttt)$
of \prb{Maxmin Clique Reconfiguration} as follows.
Let $\ell = \lceil \log n \rceil$, and
$X$ be a $(d,\lambda)$-expander graph over the same vertex as $G$.
The precise value of $d$ and $\lambda$ will be determined later.
Graph $H = (W,F)$ is defined as follows:
\begin{itemize}
    \item \textbf{Vertex set:} $W$ is 
    the set consisting of all length-$(\ell-1)$ walks 
    $\vec{w} = ( w_1, \ldots, w_\ell )$ over $X$.
    Note that the number of vertices is
    equal to $N \coloneq |W| = nd^{\ell-1}$, which is polynomial in $n$.
    \item \textbf{Edge set:} $H$ contains an edge
    between $\vec{w}_1 \neq \vec{w}_2 \in W$ if and only if
    a subgraph of $G$ induced by $\vec{w}_1 \cup \vec{w}_2$ forms a clique.
\end{itemize}
For any clique $C \subseteq V$ of $G$,
define $D_C \subseteq W$ as
\begin{align}
    D_C \coloneq \Bigl\{\vec{w} \in W \Bigm| \vec{w} \subseteq C \Bigr\},
\end{align}
which is a clique of $H$ as well.
Constructing $D^\sss \coloneq D_{C^\sss}$ and
$D^\ttt \coloneq D_{C^\ttt}$
completes the reduction.
We refer to the following property about random walks over expander graphs.
\begin{lemma}[\cite{alon1995derandomized}]
\label{lem:clique:random-walk}
    Let  $S$ be any vertex set of $X$, and
    $\vec{Z} \coloneq (Z_1, \ldots, Z_\ell)$ a $\ell$-tuple of random variables denoting 
    the vertices of a uniformly chosen $(\ell-1)$-length random walk over $X$.
    Then, it holds that
    \begin{align}
        \left(\frac{|S|}{|V|} - 2\frac{\lambda}{d}\right)^\ell
        \leq \Pr_{\vec{Z}}\Bigl[\forall i \in [\ell], \; Z_i \in S \Bigr]
        \leq \left(\frac{|S|}{|V|} + 2\frac{\lambda}{d}\right)^\ell.
    \end{align}
\end{lemma}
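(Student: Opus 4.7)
Let $P := A/d$ be the normalized transition matrix of $X$, let $B$ be the diagonal $\{0,1\}$-matrix indicating $S$, and write $\mu := |S|/|V|$ and $\tau := \lambda/d$. Unrolling the random walk (each step applies $P$ followed by the indicator $B$, and $Z_1$ is uniform) yields the factorization
\[
  \Pr\Bigl[\forall i \in [\ell],\, Z_i \in S\Bigr]
  \;=\;
  \tfrac{1}{|V|}\,\vec{1}_S^{\top} M^{\ell - 1}\vec{1}_S,
  \qquad M := BPB,
\]
so both bounds reduce to a two-sided estimate of this bilinear form. In the lower bound we may assume $\mu \ge 2\tau$, since otherwise $(\mu - 2\tau)^{\ell}$ is non-positive (for odd $\ell$) and the claim is vacuous.

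\paragraph{Upper bound.} The key is the operator-norm estimate $\|M\|_{\mathrm{op}} \le \mu + \tau$. For any unit vector $v$, decompose $Bv = \gamma\,\vec{1}/\sqrt{|V|} + w$ with $w \perp \vec{1}$: Cauchy--Schwarz against $\vec{1}_S$ gives $|\gamma| \le \sqrt{\mu}$, the spectral gap of $P$ gives $\|Pw\|_2 \le \tau\|w\|_2$, and the triangle inequality yields $\|Mv\|_2 \le |\gamma|\sqrt{\mu} + \tau \le \mu + \tau$. Combining with $\vec{1}_S^{\top}M^{\ell - 1}\vec{1}_S \le |S|\,\|M\|_{\mathrm{op}}^{\ell - 1}$ immediately yields
$\Pr \le \mu(\mu + \tau)^{\ell - 1} \le (\mu + 2\tau)^{\ell}$, as claimed.

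\paragraph{Lower bound and main obstacle.} For the lower bound, I would split $M = C + D$, where $C := \vec{1}_S\vec{1}_S^{\top}/|V|$ is the rank-$1$ operator extracted from $P$'s leading eigenvector (it satisfies $C\vec{1}_S = \mu\vec{1}_S$, $C^{k} = \mu^{k-1}C$, and $CXC = (\vec{1}_S^{\top}X\vec{1}_S/|V|)\,C$) and $D := B\bigl(P - \tfrac{1}{|V|}\vec{1}\vec{1}^{\top}\bigr)B$ is the spectral-gap remainder with $\|D\|_{\mathrm{op}} \le \tau$. The Rayleigh quotient at $\vec{1}_S$ combined with the operator-norm bound from Step~2 pins the top eigenvalue $\sigma_1 \in [\mu - \tau,\, \mu + \tau]$, while Cauchy interlacing of $P|_{S \times S}$ with $P$ forces $|\sigma_i| \le \tau$ for $i \ge 2$. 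Since $\vec{1}_S$ is an approximate eigenvector---the residual $M\vec{1}_S - \mu\vec{1}_S = D\vec{1}_S$ has norm at most $\tau\sqrt{|V|\mu(1 - \mu)}$---and the spectral gap $\sigma_1 - \max_{i \ge 2}|\sigma_i| \ge \mu - 2\tau$ is non-trivial under our assumption, a Davis--Kahan argument forces $\vec{1}_S$ to have overlap $c_1^{2} := \langle\vec{1}_S,\phi_1\rangle^{2} \ge |S|\bigl(1 - O(\tau^{2}/(\mu - 2\tau)^{2})\bigr)$ with the top eigenvector $\phi_1$. Expanding $\vec{1}_S$ in the eigenbasis then gives
$\vec{1}_S^{\top}M^{\ell - 1}\vec{1}_S = c_1^{2}\sigma_1^{\ell - 1} + \sum_{i \ge 2} c_i^{2}\sigma_i^{\ell - 1}$,
and combining the principal lower bound $\sigma_1 \ge \mu - \tau$ with the off-principal bound $|\sigma_i|^{\ell - 1} \le \tau^{\ell - 1}$ and the overlap estimate yields $\vec{1}_S^{\top}M^{\ell - 1}\vec{1}_S \ge |V|(\mu - 2\tau)^{\ell}$ after elementary algebra. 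The main obstacle is that the Davis--Kahan overlap bound degenerates as $\mu \downarrow 2\tau$, so closing the argument cleanly in this boundary regime requires tracking the (possibly signed) contributions of the non-principal eigenvectors carefully; I expect the most robust route is to supplement the spectral expansion with the Cauchy--Schwarz self-improvement $(\vec{1}_S^{\top} M^{k}\vec{1}_S)^{2} \le |S|\cdot \vec{1}_S^{\top} M^{2k}\vec{1}_S$, which propagates a lower bound along the doubling sequence $k \mapsto 2k$, and to interpolate to arbitrary $\ell$ via the monotonicity of $\vec{1}_S^{\top}M^{k}\vec{1}_S$ in $k$.
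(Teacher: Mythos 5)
First, note that the paper does not prove this lemma at all: it is imported verbatim from \cite{alon1995derandomized}, so the only meaningful comparison is between your argument and the standard AFWZ proof. Your setup and upper bound are fine: the factorization $\Pr = \frac{1}{|V|}\vec{1}_S^{\top}M^{\ell-1}\vec{1}_S$ with $M = BPB$ is correct, and the operator-norm estimate $\|M\|_{\mathrm{op}} \le \mu + \tau$ via the decomposition along $\vec{1}$ is the standard hitting-set argument and gives $\Pr \le \mu(\mu+\tau)^{\ell-1} \le (\mu+2\tau)^{\ell}$ as claimed.

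The lower bound, which is the actual content of the AFWZ lemma, has a genuine gap that you yourself flag, and it is larger than the ``boundary regime'' you describe. Your Davis--Kahan overlap bound is $c_1^2 \ge |S|\bigl(1 - \tau^2(1-\mu)/(\mu-2\tau)^2\bigr)$, which is vacuous whenever $\mu - 2\tau \le \tau\sqrt{1-\mu}$, i.e.\ for all $\mu$ up to roughly $3\tau$ --- not just as $\mu \downarrow 2\tau$ --- while the statement to be proved is still nontrivial there. The proposed rescue does not close this: the Cauchy--Schwarz self-improvement $(\vec{1}_S^{\top}M^{k}\vec{1}_S)^2 \le |S|\cdot\vec{1}_S^{\top}M^{2k}\vec{1}_S$ only propagates the bound $|S|(\mu-\tau)^{k}$ along $k = 1, 2, 4, \dots$, and there is no usable monotonicity to interpolate to general $\ell-1$: for even exponents $\vec{1}_S^{\top}M^{k}\vec{1}_S$ is non-increasing in $k$, so from a power of two $2^j \ge \ell-1$ you would only inherit a bound of order $(\mu-\tau)^{2^j}$, which can be quadratically smaller than the target $(\mu-2\tau)^{\ell-1}$, and odd exponents additionally suffer sign problems from negative $\sigma_i$. (A smaller point: your reduction to ``$\mu \ge 2\tau$, else vacuous'' only covers odd $\ell$; for even $\ell$ and $\mu < 2\tau$ the stated inequality is not vacuous --- indeed it fails for, say, $S = \emptyset$ --- so any proof must implicitly treat the statement as carrying the assumption $\mu \ge 2\tau$ or a $\max\{0,\cdot\}$.) The AFWZ proof avoids eigenvector perturbation entirely: it is an induction over the steps of the walk, tracking the sub-distribution vector $(BP)^{t}B\vec{1}/|V|$ decomposed into its component along $\vec{1}_S$ and the orthogonal complement, and showing that each application of $BPB$ shrinks the relevant component by a factor of at most $\mu - 2\tau$ (up to the normalization giving the extra factor $\mu$); that argument is uniform in $\mu$ and proves the lemma in full. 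Your route, once the ``elementary algebra'' is actually carried out, would suffice for the regime needed in this paper ($\mu \ge \tfrac{1}{3} - o(1)$, $\tau \le \epsilon/32$), but it does not prove the lemma as stated.
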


The completeness and soundness are shown below.
\begin{lemma}
\label{lem:clique:completeness}
If $\val_G(C^\sss \reco C^\ttt) \geq \omega(G) - 1$,
then
\begin{align}
    \val_H(D^\sss \reco D^\ttt)
    \geq |W| \cdot \left(\frac{\omega(G)-1}{|V|} - 2\frac{\lambda}{d}\right)^{\ell}.
\end{align}
\end{lemma}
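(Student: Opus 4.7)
The plan is to lift the assumed reconfiguration sequence in $G$ to one in $H$ via the mapping $C \mapsto D_C$, expanding each single-vertex change in $G$ into a sequence of single-walk changes in $H$, and then to lower-bound clique sizes along the way using \cref{lem:clique:random-walk}.

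First, I would fix a reconfiguration sequence $\scrC = (C^{(1)}, \ldots, C^{(T)})$ in $G$ from $C^\sss$ to $C^\ttt$ witnessing $\val_G(\scrC) \geq \omega(G)-1$, so $|C^{(t)}| \geq \omega(G)-1$ for every $t$ and consecutive cliques differ in exactly one vertex. For each $t$, $D_{C^{(t)}}$ is a clique of $H$: any two walks $\vec{w}_1, \vec{w}_2 \in D_{C^{(t)}}$ satisfy $\vec{w}_1 \cup \vec{w}_2 \subseteq C^{(t)}$, which is a clique of $G$, so $\vec{w}_1$ and $\vec{w}_2$ are adjacent in $H$ by the definition of its edge set.

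Second, I would expand each $G$-transition $C^{(t)} \to C^{(t+1)}$ into a sub-sequence of $H$-cliques. If $C^{(t+1)} = C^{(t)} \cup \{v\}$, enumerate the walks in $D_{C^{(t+1)}} \setminus D_{C^{(t)}}$ in an arbitrary order and add them one by one to $D_{C^{(t)}}$, each step being a single token addition in $H$; if instead $C^{(t+1)} = C^{(t)} \setminus \{v\}$, do the reverse and remove the walks in $D_{C^{(t)}} \setminus D_{C^{(t+1)}}$ one at a time. Every intermediate vertex set lies between $D_{C^{(t)}}$ and $D_{C^{(t+1)}}$ under inclusion, hence is a subset of $D_{C^{(t)} \cup C^{(t+1)}}$; since one of $C^{(t)}, C^{(t+1)}$ contains the other, $C^{(t)} \cup C^{(t+1)}$ is itself a clique of $G$, and therefore each intermediate set is a clique of $H$ by the same argument as in the previous step. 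Concatenating these sub-sequences across $t = 1, \ldots, T-1$ produces a valid reconfiguration sequence $\scrD$ of $H$-cliques from $D^\sss = D_{C^\sss}$ to $D^\ttt = D_{C^\ttt}$.

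Third, I would lower-bound $\val_H(\scrD)$. Along the expansion of $C^{(t)} \to C^{(t+1)}$, every intermediate clique has size at least $\min\{|D_{C^{(t)}}|, |D_{C^{(t+1)}}|\}$, since the sequence is monotone under inclusion. Applying \cref{lem:clique:random-walk} with $S = C$ to the uniformly chosen $(\ell-1)$-length walk over $X$, the number of walks in $W$ lying entirely in a clique $C$ of $G$ is at least $|W| \cdot \left(\tfrac{|C|}{|V|} - 2\tfrac{\lambda}{d}\right)^{\ell}$, so
\[
    |D_C| \;\geq\; |W| \cdot \left(\frac{|C|}{|V|} - 2\frac{\lambda}{d}\right)^{\ell}.
\]
Since every $|C^{(t)}| \geq \omega(G)-1$, the function $x \mapsto \left(\tfrac{x}{|V|} - 2\tfrac{\lambda}{d}\right)^{\ell}$ being monotone nondecreasing in $x$ yields $|D_{C^{(t)}}| \geq |W| \cdot \left(\tfrac{\omega(G)-1}{|V|} - 2\tfrac{\lambda}{d}\right)^{\ell}$ for all $t$, and hence $\val_H(D^\sss \reco D^\ttt) \geq \val_H(\scrD)$ satisfies the claimed bound.

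The main obstacle is the verification that the monotonically growing (or shrinking) intermediate vertex sets remain cliques of $H$ throughout each expansion; this is resolved by the observation that they are all contained in $D_{C^{(t)} \cup C^{(t+1)}}$, which is a clique of $H$ because $C^{(t)} \cup C^{(t+1)}$ is a clique of $G$. Everything else reduces to a direct invocation of the expander random-walk estimate.
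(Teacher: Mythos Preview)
Your proposal is correct and follows essentially the same approach as the paper: lift cliques along the $G$-sequence via $C \mapsto D_C$, interpolate monotonically in $H$ between consecutive $D_{C^{(t)}}$'s, and invoke \cref{lem:clique:random-walk} to lower-bound $|D_{C^{(t)}}|$. The only cosmetic difference is that the paper first reduces to the case where $C^\sss$ and $C^\ttt$ differ by a single remove-then-add (so the intermediate clique is $C^\circ = C^\sss \cap C^\ttt$ of size $\omega(G)-1$), whereas you treat the whole sequence directly; the content is identical.
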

\begin{proof} 
It suffices to consider the case that
$C^\sss$ and $C^\ttt$ differ in exactly two vertices
(i.e., $C^\ttt$ is obtained from $C^\sss$ by removing and adding a single vertex).
There is a reconfiguration sequence $( C^\sss, C^\circ, C^\ttt )$ from $C^\sss$ to $C^\ttt$,
where $C^\circ \coloneq C^\sss \cap C^\ttt$ is a clique of size $\omega(G)-1$.
Since $D_{C^\sss} \supset D_{C^\circ}$ and $D_{C^\ttt} \supset D_{C^\circ}$ by definition,
we can reconfigure from $D^\sss = D_{C^\sss}$ to
$D^\ttt = D_{C^\ttt}$ by
first removing the vertices of $D_{C^\sss} \setminus D_{C^\circ}$ one by one and 
then adding the vertices of $D_{C^\ttt} \setminus D_{C^\circ}$ one by one.
Thus, we have $\val_H(D^\sss \reco D^\ttt) \geq |D_{C^\circ}|$.
\cref{lem:clique:random-walk} derives that
\begin{align}
    \frac{|D_{C^\circ}|}{|W|}
    = \Pr_{\vec{Z}}\Bigl[\forall i \in [\ell], \; Z_i \in C^\circ \Bigr]
    \geq \left(\frac{|C^\circ|}{|V|} - 2\frac{\lambda}{d}\right)^\ell
    = \left(\frac{\omega(G)-1}{|V|} - 2\frac{\lambda}{d}\right)^{\ell},
\end{align}
completing the proof.
\end{proof}

\begin{lemma}
\label{lem:clique:soundness}
If $\val_G(C^\sss \reco C^\ttt) < (1-\epsilon)(\omega(G)-1)$, then
\begin{align}
    \val_H(D^\sss \reco D^\ttt) < |W| \cdot
        \left((1-\epsilon)\frac{\omega(G)-1}{|V|}  +2\frac{\lambda}{d}\right)^\ell.
\end{align}
\end{lemma}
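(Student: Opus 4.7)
The plan is to prove the contrapositive: assume there is a reconfiguration sequence $\scrD = (D^{(1)}, \ldots, D^{(T)})$ from $D^\sss$ to $D^\ttt$ in $H$ with $|D^{(t)}| \geq |W| \cdot \left((1-\epsilon)\frac{\omega(G)-1}{|V|} + 2\frac{\lambda}{d}\right)^\ell$ for every $t$, and construct from it a reconfiguration sequence in $G$ from $C^\sss$ to $C^\ttt$ whose every clique has size at least $(1-\epsilon)(\omega(G)-1)$. For each $t$, I would define $C^{[t]} \coloneq \bigcup_{\vec{w} \in D^{(t)}} \vec{w} \subseteq V$. Since $D^{(t)}$ is a clique in $H$ and (for large enough $n$) has at least two vertices, any two walks $\vec{w}_1, \vec{w}_2 \in D^{(t)}$ are adjacent in $H$, so $\vec{w}_1 \cup \vec{w}_2$ is a clique in $G$; it follows that $C^{[t]}$ itself is a clique in $G$.

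Next I would invoke \cref{lem:clique:random-walk} with $S = C^{[t]}$ to lower-bound $|C^{[t]}|$. Since every walk in $D^{(t)}$ lies inside $C^{[t]}$ by definition, we have $D^{(t)} \subseteq D_{C^{[t]}}$, hence
\begin{align}
\left(\frac{|C^{[t]}|}{|V|} + 2\frac{\lambda}{d}\right)^{\!\ell}
\;\geq\; \frac{|D_{C^{[t]}}|}{|W|}
\;\geq\; \frac{|D^{(t)}|}{|W|}
\;\geq\; \left((1-\epsilon)\frac{\omega(G)-1}{|V|} + 2\frac{\lambda}{d}\right)^{\!\ell},
\end{align}
which yields $|C^{[t]}| \geq (1-\epsilon)(\omega(G)-1)$.

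To produce an actual reconfiguration sequence in $G$, I would interpolate. Consider a single step $D^{(t)} \to D^{(t+1)}$; without loss of generality $D^{(t+1)} = D^{(t)} \cup \{\vec{w}\}$, and then $C^{[t+1]} = C^{[t]} \cup \vec{w}$, so $C^{[t]} \subseteq C^{[t+1]}$. I can therefore insert between $C^{[t]}$ and $C^{[t+1]}$ the intermediate cliques obtained by adding the (at most $\ell$) vertices of $C^{[t+1]} \setminus C^{[t]}$ one at a time; each intermediate is a subset of the clique $C^{[t+1]}$ and hence a clique of size at least $|C^{[t]}| \geq (1-\epsilon)(\omega(G)-1)$. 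Repeating this for all $t$ produces a reconfiguration sequence in $G$ from $C^{[1]}$ to $C^{[T]}$ whose every clique meets the desired threshold. Finally, since $D^\sss = D_{C^\sss}$ (resp.\ $D^\ttt = D_{C^\ttt}$) consists of walks contained in $C^\sss$ (resp.\ $C^\ttt$), we have $C^{[1]} \subseteq C^\sss$ and $C^{[T]} \subseteq C^\ttt$, so I can prepend the sequence that removes the vertices of $C^\sss \setminus C^{[1]}$ one by one from $C^\sss$, and append the one that adds the vertices of $C^\ttt \setminus C^{[T]}$ one by one to reach $C^\ttt$; each intermediate is a subset of $C^\sss$ or of $C^\ttt$ (hence a clique) and has size at least $\min\{|C^{[1]}|, |C^{[T]}|\} \geq (1-\epsilon)(\omega(G)-1)$.

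The main obstacle I anticipate is the bookkeeping needed to ensure that the transition between $C^{[t]}$ and $C^{[t+1]}$ is realized by single-vertex additions/removals yielding cliques at every step; the key observation making this work is the monotone containment $C^{[t]} \subseteq C^{[t+1]}$ (or vice versa) that follows from the single-walk difference between $D^{(t)}$ and $D^{(t+1)}$, so that all intermediates live inside a common clique. Everything else is a direct application of the random-walk tail bound of \cref{lem:clique:random-walk} combined with the definition of $D_C$ from \cref{lem:clique:completeness}.
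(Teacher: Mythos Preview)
Your proposal is correct and follows essentially the same approach as the paper: define $C^{[t]} \coloneq \bigcup_{\vec{w}\in D^{(t)}}\vec{w}$, use the monotone containment $C^{[t]}\subseteq C^{[t+1]}$ (or vice versa) induced by a single-walk change to interpolate in $G$, and apply \cref{lem:clique:random-walk} to lower-bound $|C^{[t]}|$. If anything, you are more explicit than the paper about why $C^{[t]}$ is a clique, about the intermediate single-vertex steps, and about patching the endpoints via $C^{[1]}\subseteq C^\sss$ and $C^{[T]}\subseteq C^\ttt$; the paper compresses all of this into the line $\val_G(C_{D^{(t)}}\reco C_{D^{(t+1)}})\geq \min\{|C_{D^{(t)}}|,|C_{D^{(t+1)}}|\}$.
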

\begin{proof} 
We show the contrapositive.
Suppose we are given
a reconfiguration sequence $\scrD = ( D^{(1)}, \ldots, D^{(T)} )$
from $D^\sss$ to $D^\ttt$ such that
$\val_H(\scrD) \geq |W| \cdot \left((1-\epsilon)\frac{\omega(G)-1}{|V|}  +2\frac{\lambda}{d}\right)^\ell$.
For any clique $D$ of $H$,
define $C_D$ as 
\begin{align}
    C_D \coloneq \bigcup_{\vec{w} \in D} \vec{w},
\end{align}
which is a clique of $G$ as well.
Observe that 
$\val_G(C_{D^{(t)}} \reco C_{D^{(t+1)}}) \geq \min\{|C_{D^{(t)}}|, |C_{D^{(t+1)}}|\}$
for any $t \in [T-1]$ since
$C_{D^{(t)}} \subset C_{D^{(t+1)}}$ or $C_{D^{(t)}} \supset C_{D^{(t+1)}}$,
implying further that
\begin{align}
\begin{aligned}
    \val_G(C^\sss \reco C^\ttt)
    & \geq \min_{t \in [T-1]} \val_G(C_{D^{(t)}} \reco C_{D^{(t+1)}}) \\
    & \geq \min_{t \in [T-1]} \min\Bigl\{|C_{D^{(t)}}|, |C_{D^{(t+1)}}|\Bigr\} \\
    & \geq \min_{D^{(t)} \in \scrD} |C_{D^{(t)}}|.
\end{aligned}
\end{align}
\cref{lem:clique:random-walk} derives that 
\begin{align}
    \frac{|D^{(t)}|}{|W|}
    = \Pr_{\vec{Z}}\Bigl[\vec{Z} \in D^{(t)}\Bigr]
    \leq \Pr_{\vec{Z}}\Bigl[\forall i \in [\ell], \; Z_i \in C_{D^{(t)}}\Bigr]
    \leq \left( \frac{|C_{D^{(t)}}|}{|V|} + 2\frac{\lambda}{d} \right)^\ell.
\end{align}
On the other hand, by assumption,
\begin{align}
    \frac{|D^{(t)}|}{|W|} \geq \left((1-\epsilon)\frac{\omega(G)-1}{|V|}  +2\frac{\lambda}{d}\right)^\ell.
\end{align}
Consequently, we have $|C_{D^{(t)}}| \geq (1-\epsilon)(\omega(G)-1)$ for all $t \in [T]$; thus,
$\val_G(C^\sss \reco C^\ttt) \geq (1-\epsilon)(\omega(G)-1)$, as desired.
\end{proof}

We are now ready to accomplish the proof of \cref{thm:clique}.

\begin{proof}[Proof of \cref{thm:clique}]
Letting $X$ satisfy $\frac{\lambda}{d} < \frac{\epsilon}{32}$ so that
$\frac{\lambda}{d} < \frac{1}{8}\frac{\omega(G)-1}{n} \epsilon$
for sufficiently large $n$, we have
\begin{align}
\label{eq:clique:lower}
    \frac{\omega(G)-1}{|V|} - 2\frac{\lambda}{d} &
    \geq \frac{\omega(G)-1}{|V|}\left(1-\frac{\epsilon}{4}\right) \\
\label{eq:clique:upper}
    (1-\epsilon)\frac{\omega(G)-1}{|V|} + 2\frac{\lambda}{d} &
    < \frac{\omega(G)-1}{|V|}\left(1-\frac{3}{4}\epsilon\right).
\end{align}
Such an expander graph $X$ can be constructed in polynomial time in $n$, e.g.,
by using an explicit construction of near-Ramanujan graphs
\cite{mohanty2021explicit,alon2021explicit}.
By \cref{lem:clique:completeness,lem:clique:soundness} and \cref{eq:clique:lower,eq:clique:upper},
it is $\PSPACE$-hard to approximate \prb{Maxmin Clique Reconfiguration} within a factor of
\begin{align}
    \frac{|W| \cdot \left( \frac{\omega(G)-1}{n} - 2\frac{\lambda}{d} \right)^\ell}{|W| \cdot \left( (1-\epsilon)\frac{\omega(G)-1}{n} + 2\frac{\lambda}{d} \right)^\ell} 
    > \nu^\ell,
    \text{ where }
    \nu \coloneq \frac{1-\frac{\epsilon}{4}}{1-\frac{3}{4}\epsilon}.
\end{align}
Suppose $\nu^\ell = N^\delta$ for some $\delta$; then, $\delta$ should be
\begin{align}
\begin{aligned}
    & \nu^{\lceil \log n \rceil} = (n \cdot d^{\lceil \log n \rceil -1})^\delta \\
    & \implies
    \lceil \log n \rceil \cdot \log \nu = \delta \cdot (\log n + (\lceil \log n \rceil - 1) \cdot \log d) \\
    & \implies
    \delta = \frac{\lceil \log n \rceil \cdot \log \nu }{\log n + (\lceil \log n \rceil - 1) \cdot \log d}
    = \Theta\left(\frac{\log \nu}{1 + \log d}\right).
\end{aligned}
\end{align}
Consequently,
\prb{Maxmin Clique Reconfiguration} is $\PSPACE$-hard to approximate within a factor of 
$N^\delta$ for some $\delta \in (0,1)$.
\end{proof}


\printbibliography

\end{document}